\newcommand\supp{{\mathrm{supp}}}
\newtheorem{lemma}{{\bf \sc Lemma}}
\newtheorem{example}{{\bf \sc Example}}
\newenvironment{continuance}[1]
  {\newcommand\continuanceref{\ref{#1}}\continuancex}
  {\endcontinuancex}
\newtheorem{proposition}{{\bf \sc Proposition}}
\newtheorem{observation}{{\bf \sc Observation}}
\newtheorem*{properties}{{\bf \sc Properties}}
\DeclareMathOperator*{\argmin}{arg\,min}
\DeclareMathOperator*{\argmax}{arg\,max}
\begin{document}

\title{Interactions across multiple games: \\
cooperation, corruption, and organizational design}
\author{Jonathan Bendor, Lukas Bolte, Nicole Immorlica, Matthew O. Jackson\thanks{Jonathan Bendor, %Graduate School of Business, 
Stanford University;
%Stanford, CA; \href{mailto:jonathan.bendor@stanford.edu}{jonathan.bendor@stanford.edu}. 
Lukas Bolte, %Department of Economics, 
Tilburg University; 
%Tilburg, Netherlands; \href{mailto:lukas.bolte@outlook.com}{lukas.bolte@outlook.com}. 
Nicole Immorlica, %Department of Computer Science, 
Yale University and %, New Haven, CT; 
Microsoft Research; %, Cambridge, MA; \href{mailto:nicimm@microsoft}{nicimm@microsoft}. 
Matthew O. Jackson, 
%Department of Economics, 
Stanford University %Stanford, CA; \href{mailto:jacksonm@stanford.edu}{jacksonm@stanford.edu}. Jackson is also an external faculty member at 
and the 
Santa Fe Institute. 
Jonathan Bendor passed away in November 2025.  His intuition, deep insights, and wide-ranging perspective were vital in the development of this project, and we were very fortunate to have had him as a coauthor and friend.
We gratefully acknowledge support under NSF grant  SES-2018554. We thank seminar participants from the World Congress of the Game Theory Society, the European Winter Meetings of the Econometric Society, the International Workshop on Economic Theory, VSET, as well as Antonio Cabrales, Wei Li, Roger Myerson, and Alex Wolitzky for helpful comments and discussion.}}

\date{February 2026 
}

\maketitle

\begin{abstract}
Teamwork is vital in many settings, and it is socially beneficial for teams to cooperate in some situations (``good games'') and not in others (``bad games;'' e.g., those that allow for corruption). A team's cooperation in any given game depends on expectations of cooperation in future iterations of both good and bad games. We identify when sustaining cooperation on good games necessitates cooperation on bad games. We then characterize how a designer should optimally assign workers to teams and teams to tasks that involve varying arrival rates of good and bad games. Our results show how organizational design can be used to promote cooperation while minimizing corruption. 

\textsc{JEL Classification Codes:} C73, D23, D73, L20

\textsc{Keywords:} Teams, Organizational Design, Cooperation, Corruption, Police, Bureaucracy, Stochastic Games
\end{abstract}

\setcounter{page}{0}\thispagestyle{empty}
\newpage

\section{Introduction}

Teamwork---informal cooperation among members---is essential in almost all forms of production, both private and public. 
However, team members can cooperate in ways that are harmful to a firm, organization, or society.  Importantly, ``good'' and ``bad'' forms of cooperation are intertwined. For example, building a cooperative culture among police not only induces productive cooperation, such as aiding other officers in dangerous circumstances, but can also lead them to cooperate in corrupt or destructive ways, such as the ``blue wall of silence'' (police officers under-reporting misconduct).\footnote{See \cite{chevigny1995}. Similarly, some scholars (e.g., \cite{gibson2003wall}) document a ``white wall of silence'': doctors not reporting colleagues' malpractice.} If a police officer does not collude with a partner who has taken a bribe or used excessive force, that partner may not back the officer up in a future dangerous situation.\footnote{Famously, in 1970, Frank Serpico, then a police detective in the NYPD, exposed police corruption by contributing to a New York Times article. A year later, after being shot and wounded during a drug raid, Serpico alleged that his police partners led him into a dangerous situation to be murdered \citep{phalen2012}. Ostracism of whistleblowers is anticipated and widespread: ``Officers who report misconduct are ostracized and harassed; become targets of complaints and even physical threats; and are made to fear that they will be left alone on the streets in a time of crisis. This draconian enforcement of the code of silence fuels corruption because it makes corrupt cops feel protected and invulnerable'' (\cite{mollen1994}).} 
Similar interactions between good and bad forms of cooperation arise in a wide variety of organizations, including in government, military, hospitals, consulting, construction, manufacturing, and retail.
  
We study when and how the structure and design of organizations can encourage socially beneficial forms of cooperation while inhibiting harmful forms of cooperation. We model teams that face a random arrival of games over time. When a game arrives, each team member chooses whether to cooperate with others. In any given game, a team member prefers overall team cooperation to a lack of cooperation but individually prefers not to cooperate. Thus, the games have a prisoners' dilemma-like structure. 

This is not a simple repeated game: the stage games take on different forms. In some of the games, cooperation is beneficial to society (``good games''); in other games, cooperation is harmful (``bad games''). Socially optimal equilibria are those in which team members cooperate in good games only. Yet as \cite{tirole88} conjectured, ``\dots it may be worth tolerating some minor acts of detrimental co-operation to allow trust between employees to develop.'' We show that the team-design problem is nontrivial because of interdependencies between the different stage games: equilibrium structure depends on the arrival frequencies of both good and bad games as well as their payoff structures. 

Our analysis takes a design perspective.  By adjusting team composition and assignments to various tasks, an organizational designer can control the incentives and behavior of the agents involved without observing actions in specific games or even whether a game occurs.  This opens a new set of mechanism design problems, in which the dynamic set faced is controlled by the designer.   This design perspective involves several dimensions. 

One design dimension that we explore is team composition and durability. Some organizations, such as the military or police, keep teams together over time to foster cooperation. Others---e.g., diplomacy, retail, logistics, and services---frequently reshuffle members across teams or simply hire short-term contract workers for certain tasks, eliminating the repeated interaction necessary for cooperation. In which settings should teams be kept together versus reshuffled?  Again, this issue is nontrivial  as \cite{tirole88} anticipated: ``Fighting the detrimental cooperation by keeping the employees' relationship short may also kill the beneficial co-operation.''  We formalize this insight and identify conditions that make detrimental cooperation necessary to sustain socially beneficial cooperation.  

A second dimension that we analyze is team specialization. Organizations typically have a variety of tasks that must be completed, and each task faces inherently different rates of good and bad games. In some organizations, different teams work on different tasks; in others, all teams work on the same tasks. When teams are specialized, some more frequently find themselves in situations where cooperation is socially beneficial (e.g., dangerous situations where police officers need back-up), while other teams more often face opportunities to collude (e.g., to accept bribes). We study when it is optimal to specialize teams, essentially creating two parts of the organization, as opposed to when optimality requires leveraging substantial interdependencies between good and bad cooperation and precluding specialization. 

We also consider dynamic and reactive versions of the assignment of teams to tasks---taking full advantage of the dynamic aspect of dynamic design. Teams may be completing a certain task for some duration of time before being moved to another; they may also be moved in response to, say, an opportunity to cooperate in a situation where it is socially beneficial, e.g., after police officers find themselves in a dangerous spot. Dynamically and reactively reassigning teams to tasks further allows the organization to affect the random stream of opportunities to cooperate. We fully characterize the optimal structure of such assignments for a special case---when it is sufficiently easy to sustain cooperation in the bad game. 

As a basis for these analyses, we first characterize the player-optimal equilibrium for teams facing a fixed distribution of tasks as a function of the payoffs to cooperation and defection in both good and bad games, the games' arrival probabilities, and a team's durability (how frequently team memberships are reshuffled). 

Equipped with the characterization of equilibrium play, we then ask when it is possible to reshuffle teams facing a fixed distribution of tasks at a rate so that team members cooperate in good games only. Our first result identifies three conditions: 
\begin{itemize}
\item[(i)] it is more tempting to deviate in bad games than in good ones,
\item[(ii)] good games are sufficiently likely to arrive (in absolute terms) to sustain cooperation on them alone, and
\item[(iii)] good games are sufficiently likely to arrive (in relative terms) to permit an optimal level of team durability that enables cooperation in good games but not in bad ones.
\end{itemize}

If all three conditions hold, to achieve teams cooperating only in good games, partial reshuffling of teams may be required---occasionally reassigning team members may be socially optimal. If team members expect enough---but not too much---future interactions with their teammates, then cooperation can be supported in one game but not the other.  

If any of these three conditions fails, the organization is faced with a difficult choice: either keep teams together and end up with cooperation in all games (including bad ones) or reshuffle teams enough to preclude all kinds of cooperation. Hence, in some settings, corruption may be a necessary price for inducing productive cooperation. For example, in policing or certain types of military activities, the benefits of cooperation in good situations may be high---e.g., supporting one's team saves lives---relative to the damage caused by cooperation in bad ones---e.g., accepting bribes. However, the temptation to deviate from cooperation may also be high as agents may need to risk their lives to save teammates. Thus, cooperation in good situations can only be sustained with durable teams and may \emph{require} cooperation in bad situations. In contrast, consider warehouse workers fulfilling orders. The organization may rarely need employees to collaborate; the more frequent and costly concern is that workers can collude to steal merchandise. In this situation, ``teams'' are reshuffled frequently, creating a more anonymous workforce with no cooperation---effectively, no teams.

We then consider situations in which different teams can be assigned to different tasks, affecting the likelihood with which they face good and bad games. The designer can then keep certain teams (those facing mostly good games) together to encourage cooperation while reshuffling others (those facing mostly bad games) to deter collusion. For instance, the chief of a large police force might have some units respond to dangerous armed situations while others deal with tasks such as routine traffic stops that allow for bribes. In such circumstances, only the former teams would stay together over time. For example, we show that if there are two tasks, one with only bad games and one with only good games, and facing only good games is enough to sustain cooperation, then fully specializing teams is optimal. In this case,  interdependencies across games are removed. However, when this is not the case and cooperation from only facing good games is infeasible, then the designer may benefit from allowing for some interdependencies, i.e., having teams face both games to allow for cooperation, including socially undesirable. 

We also allow the assignment of teams to tasks to be dynamic and reactive---i.e., the current task assignment can be a function of past assignments. For example, a team may be rotating between staying on a task with only good games (for some time) and staying on a task with only bad games (for some time). Or a team may be randomly moved between tasks. We also consider the case where the designer observes whether a game occurred (but not the equilibrium play) and can condition future task assignments on that. The space of such ``reactive assignments'' is large, and so we focus on a special case where cooperation on bad games is easy to sustain. In an optimal assignment, when the designer observes whether a good game occurred, teams are assigned to the task with more good games until a good game occurs, at which point they are reassigned to tasks with mostly bad games for a number of periods before returning. We derive a similar result for the case where the designer does not observe good games: the difference is that teams move on from the task with good games whether or not a good game occurred. 

In the online appendix, we provide comparative static results with regard to the payoff structure. If an organization can directly affect payoffs, which is not our main focus, then these comparative statics shed light on other strategies for enabling beneficial cooperation. A designer can encourage good cooperation while deterring bad cooperation by altering the benefits of cooperation in good games (for instance, giving bonuses for overall profits) or increasing the temptations of defecting in bad games (for instance, offering whistleblower rewards). Interestingly, these behave quite differently than in standard repeated games, where only the ratio of benefits to temptations matters. With multiple games, benefits from cooperation in a given game have spillovers that enable cooperation in \emph{all} games, while temptations to deviate affect \emph{only one} game's incentives at a time. Thus, the latter is more selective than the former. The effective costs of altering the benefits from cooperation or temptations of defecting also vary. Because bonuses are paid, a policy of giving bonuses for cooperative behavior imposes an actual cost on an organization. In contrast, whistleblower rewards deter collusion and are not dispersed on the equilibrium path.

\paragraph{The Relationship to the Literature.} 

Game theory, team theory, and mechanism design generally take the design problem as designing incentives for one particular (in some cases, dynamic) context.  Here, we examine both the design of who faces which dynamic sequence of contexts, as well as with whom they play.%
\footnote{There is a literature that examines how the diversity of teams affects their performance potential (e.g., see \cite{page2008difference}), but that is based on team heterogeneity, which is not considered here, and that literature does not consider any of the incentive issues or dynamics considered here.} To our knowledge, this is the first paper to analyze the interaction of incentives across games, as well as the dynamic design of the sequence of games faced.  This also provides a first analysis of treating the assignments of players to teams, and teams to games, as endogenous.  This provides new angles for understanding team and organizational design.   

There is a small literature concerned with cooperation across multiple games \citep{bardhan1980, karnaniw85,rotemberg1986supergame,bernheimw1990,watson1999,watson2002,spagnolo1991,rauchw2003}. Relative to existing work, ours is the first to analyze interactions across games in which cooperation is desired in some but not others, as well as team assignment, and the implications for organizational design. 

There is extensive study of cooperation and the patterns of bilateral relationships  \citep{kandori1992,ellison1994,kranton1996,goshr1996}, the incentives to oust dissidents \citep{alim2016}, monitoring technologies \citep{wolitzky2012,navap2012}, community network structure \citep{lipperts2011,jacksonrx2012,alim2013,balmacedae2017}, and what types of punishment are preferred \citep{acemogluw2019}. We intentionally abstract from many of these issues, instead focusing on how the organizational structure affects cooperation.

Our model is a stochastic game (introduced by \citealp{shapley1953}; see \citealp{solanv2015} for a short review). Our model's special structure---the players' current actions do not affect the distribution over subsequent games---may be useful for extensions. This environment allows us to characterize equilibria, which are typically difficult to solve for \citep{solanv2015}.  Moreover, our analysis involves the design of a stochastic game, rather than analyzing a given one.

A number of studies analyze corruption as a by-product of institutional and organizational structures  (e.g., \cite{basubm92,shleiferv93,myerson1993, bardhanm2000, acemoglur19}). Our analysis contributes by showing how corruption depends not only on the benefits and costs of corrupt collusion but also on how parties interact in other domains. Furthermore, our analysis provides insights into why corruption may be tolerated in some countries and organizations and why eliminating it requires wide-ranging interventions and can have unintended consequences.

Some studies have asked how ``walls of silence,'' where group members conceal information that may harm other members, are sustained \citep{benoitd2004,muelheusserr2008}. In \cite{benoitd2004}, honest agents may vote for a regime that conceals misconduct in fear of a type 2 error in which they are innocent but found guilty. \cite{muelheusserr2008} show how reputational concerns and the threatened loss of future cooperation can induce agents to remain silent. Our model differs: agents are homogeneous, and there is no private information. Our comparative statics also differ: we consider how cooperation relates to organizational design.

\section{The Model}\label{Model}

We model a team's strategic interactions as a stochastic game played by $n$ players. The ensuing model allows us to characterize a team's equilibrium cooperation as a function of the team's durability and the tasks it is assigned to; we enrich the model to consider the dynamic assignment of teams to tasks at a later stage.

Time proceeds discretely, $\tau=0,\dots,\infty$. The stage games, or simply games for brevity, played at those times are of two varieties, $\{G,B\}$, with generic element $g$. As we formalize later, the notation $G,B$ denotes that the first type of game involves cooperation that is socially good; the second, cooperation that is socially bad. Stage game $g$ occurs with probability $p_g>0$ such that $p_G+p_B\leq 1$, and at most one game occurs. For example, police patrols might be told to respond to crimes that occur at random times. Players discount payoffs received at time $\tau$ by $\delta^\tau$, where the discount factor $\delta \in (0,1)$.

Cooperation in the stage games produces overall benefits to the players, but there are also temptations to deviate from cooperating. For simplicity, we assume binary actions: in each game $g$, a player $i$ chooses either to `cooperate' or `not': $A=\{C,N\}$. 

To simplify the analysis, we impose a number of restrictions on the stage games. 

\begin{properties}Let $\pi_{i,g}(\mathbf{a})$ be the payoff to player $i$ in game $g$ as a function of the action profile $\mathbf{a}\in A^n$.
\begin{enumerate}
    \item \label{property:symmetry}For all players $i$, action profiles $\mathbf{a}=(a_i,\mathbf{a}_{-i})$ and $\mathbf{a}_{-i}'$, where $\mathbf{a}_{-i}'$ is a permutation of $\mathbf{a}_{-i}$, we have 
    $$\pi_{i,g}(\mathbf{a})=\pi_g(a_i,\mathbf{a}_{-i})=\pi_g(a_i,\mathbf{a}^\prime_{-i}),$$
    for some function $\pi_g$,
    \item \label{property:max} aggregate payoffs are uniquely maximized when all players cooperate, i.e., $$C^n=\argmax_{\mathbf{a}\in A^n}\sum_i \pi_g(a_i,\mathbf{a}_{-i}),$$
    \item \label{property:dom} not cooperating is a strictly dominant action, i.e.,
    $$
        \pi_g(N,\mathbf{a}_{-i})>\pi_g(C,\mathbf{a}_{-i}) \text{ for all }\mathbf{a}_{-i} \in A^{n-1},
    $$
    \item \label{property:minmax} a player's payoff is minimized by complete non-cooperation of others, i.e.,
    $$
        N^{n-1}\in\argmin_{\mathbf{a_{-i}}\in A^{n-1}}\left\{\max_{a_i \in A}\pi_g (a_i,\mathbf{a_{-i}})\right\},
    $$
    \item\label{property:last} aggregate payoffs are strictly larger with complete non-cooperation than with any asymmetric play, i.e., 
    $$
        N^n= \argmax_{\mathbf{a} \in A^n\setminus\{ C^n\}}\sum_i \pi_g(a_i,\mathbf{a_{-i}}),
    $$
    \item \label{property:star} and the gain from playing $N$ instead of $C$ is smallest when all other players play $C$, i.e.,  $$\pi_g(N,C^{n-1})-\pi_g(C,C^{n-1})\leq \pi_g(N,\mathbf{a_{-i}})-\pi_g(C,\mathbf{a_{-i}})\text{ for all }\mathbf{a_{-i}} \in A^{n-1}.$$    
\end{enumerate}
\end{properties}

Properties~\ref{property:symmetry}--\ref{property:dom} are standard: Property~\ref{property:symmetry} states that a player's payoff is only a function of their own action and the aggregate actions, not their own identity and the identities of others, Property~\ref{property:max} implies that players benefit from cooperation, while Property~\ref{property:dom} means each player individually benefits from not cooperating in a single game. Together with Property~\ref{property:minmax}, Property~\ref{property:dom} also implies a lower bound on the payoffs a player receives in any continuation, as well as an equilibrium that achieves them. Lastly, Properties~\ref{property:last} and~\ref{property:star} simplify the analysis by ensuring that, on the equilibrium path, asymmetric play is never player-optimal, i.e., in any particular game, there is either full cooperation or full non-cooperation. As we show in Lemma~\ref{lem:equilibrium_characterization}, these properties allow us to characterize the games using just a few parameters described below. Thus, we can focus our attention on interactions across games, rather than the details of dynamics within a game, which can be quite complex absent such structure.\footnote{For example, see \cite{olszewskis2018}. Beyond the dynamic differences, their game also involves private information.}

The incentives in game $g$ are captured by the following parameters. Complete cooperation generates a benefit, $c_g$, relative to complete non-cooperation for all players: $c_g\equiv {\pi}_g(C,C^{n-1})-{\pi}_g(N,N^{n-1})$. There is a deviation temptation, $d_g$, which captures the gain in stage payoff to any agent who deviates when the others cooperate: $d_g\equiv {\pi}_g(N,C^{n-1})-{\pi}_g(C,C^{n-1})$. Hence, in our base model, the stochastic game is described by the number of players, $n$, the discount factor, $\delta$, and the cooperation benefit $c_g$, deviation temptation $d_g$, and likelihood $p_g$ of each stage game $g$. 

For example, in a prisoner's dilemma game $g$, there are $n=2$ players with payoffs: 
$$
    \centering
    \begin{tabular}{cc|c|c|}
      & \multicolumn{1}{c}{} & \multicolumn{2}{c}{Player 2}\\
      & \multicolumn{1}{c}{} & \multicolumn{1}{c}{C}  & \multicolumn{1}{c}{N} \\\cline{3-4}
      \multirow{2}*
      {Player 1}  & C & $(c,c)$ & $(-a,c+d)$ \\\cline{3-4}
      & N & $(c+d,-a)$ & $(0,0)$ \\\cline{3-4}
    \end{tabular}
    \smallskip
$$
where $c,d>0$ and $a> c+d$. All players cooperating maximizes the total payoff, but a player can benefit by unilaterally deviating. The condition that $a> c+d$ ensures that the game satisfies Properties~\ref{property:last} (since then $0>c+d -a$) and~\ref{property:star} (since then $c+d - c\leq 0 - (-a)$). The resulting cooperation benefit and deviation temptation are $c_g\equiv c$ and $d_g\equiv d$, respectively.

Other games, such as a favor exchange game in which players randomly have chances to either give or receive favors---both good and bad (e.g., lending money vs.\ not reporting a friend's crime)---also satisfy these properties. Different players move at different points in time, but a similar analysis applies.

Players condition their actions considering the full stochastic game. The history of the stochastic game at time $\tau$ encodes all actions taken up to time $\tau$, i.e., it is an element $\{(g_{\tau'},\mathbf{a}_{\tau'})\}_{\tau'=0}^{\tau-1}$, where $g_{\tau'},\mathbf{a}_{\tau'}$ are the game (which may be no game) and the action profile played at time $\tau'$. Let $\mathcal{H}_\tau$ denote all such histories and $\mathcal{H}\equiv \cup_\tau \mathcal{H}_\tau$. A strategy for a player is then a mapping $\sigma: \mathcal{H} \times \{G,B\} \rightarrow \Delta(C,N)$. With this setup, we consider subgame-perfect equilibria. 

\subsection{Cooperation}\label{sec:coop}
We say that an equilibrium strategy profile has \emph{cooperation in game $g$} if, on the equilibrium path, all players cooperate whenever game $g$ is played. We say there is \emph{no cooperation in game $g$} if, on the equilibrium path, all players do not cooperate whenever game $g$ is played. This leads to four types of canonical cooperation: 
$$
\centering
    \begin{tabular}{c|c|c|}
       \multicolumn{1}{c}{} & \multicolumn{1}{c}{$C^n$ in $B$}  & \multicolumn{1}{c}{$N^n$  in $B$} \\\cline{2-3}
       $C^n$  in $G$ & $\emph{total cooperation}$ & \emph{cooperation only in $G$} \\\cline{2-3}
       $N^n$  in $G$ & \emph{cooperation only in $B$} &  \emph{no cooperation} \\\cline{2-3}
    \end{tabular}
$$

The different kinds of cooperation can be partially ordered by some social criterion. Let $V_G>0$ denote the social benefit from cooperation in game $G$, the good game, from the perspective of a social planner or designer.  For example, this could be the social benefits to a society's broader population of having a military or police force that cooperates when in dangerous situations, or the profits to a company from having a team cooperate on a productive task, etc. Similarly, $V_B<0$ denotes the social harm from cooperation in the bad game $B$. Thus, cooperation only in game $G$ is clearly most preferred; we call this \emph{optimal cooperation}. Cooperation only in game $B$ is clearly the worst. The social value of total cooperation, $p_G V_G+p_B V_B$, and the social value of no cooperation, $0$, lie in between the best and worst outcomes; their ordering depends on the relative social impact of cooperation in the two games and on their probabilities.

In what follows, we sometimes refer to organizational design and a designer. However, we remain agnostic about who designs the organization or whether they are benevolent. In some situations, these are decision-makers, such as a police chief or a firm's executives, who can purposefully implement an organizational design such as reshuffling teams. Sometimes, however, there is no explicit designer; in such circumstances, our results clarify whether it is possible to reach optimal cooperation as a function of the environment. 

\subsection{Player-Optimal Equilibria}\label{sec:equilibrium_characterization}
As there can be multiple equilibria (the usual folk-theorem arguments apply), we focus on those that are optimal for the players. Given the game's symmetry and its required properties, we characterize the equilibria that maximize the sum of players' expected discounted payoffs.

Characterizing such player-optimal equilibria makes sense for three reasons. First, they are extremal---supporting maximal cooperation---and thus define the frontier and thus the full set of equilibrium payoffs, regardless of what one believes about equilibrium selection. Second, it may not be robust to presume that a designer can guide players to one equilibrium if all players prefer another.\footnote{Our equilibrium selection coincides with that in \cite{rotemberg1986supergame}, who focus on the most collusive symmetric equilibrium in a stochastic repeated game.} Third, the possibility of interaction across good and bad games that is highlighted here is the object of interest, and (Pareto-dominated) equilibria that involve no such interaction could also be analyzed, but involve no new insights or intuitions. More generally, once one departs from the player-optimal equilibrium, equilibrium behavior need not admit simple characterizations, as we show below, and the designer's problem no longer reduces to transparent incentive constraints. Analyzing arbitrary equilibria would, therefore, require additional structure or selection assumptions and would obscure the mechanism that is central to our analysis.

\begin{lemma}\label{lem:equilibrium_characterization}
In any player-optimal subgame-perfect equilibrium, on the equilibrium path, there is total cooperation if \begin{equation}\label{eq:full_cooperation}
     \max_g\{d_g\}\leq \sum_g\frac{\delta }{1-\delta}p_gc_g;
\end{equation}
there is cooperation only in game $g$ if 
\begin{equation}\label{eq:partial_cooperation}
    d_g\leq \frac{\delta }{1-\delta}p_gc_g
\end{equation}
and \eqref{eq:full_cooperation} does not hold; and otherwise, there is no cooperation.
\end{lemma}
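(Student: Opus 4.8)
The plan is to characterize the player-optimal subgame-perfect equilibrium by showing that the maximal sustainable cooperation is supported by grim-trigger strategies, with the threat of perpetual mutual non-cooperation serving as the harshest credible punishment. The first task is to establish that reversion to $N^n$ in every game forever is itself an equilibrium: this follows immediately from Property~\ref{property:dom}, since $N$ is strictly dominant in each stage game, so playing $N^n$ in every game on every history is subgame-perfect and yields each player their stage-game minmax in every period. By Property~\ref{property:minmax}, this is the worst payoff a player can be credibly held to, so it gives the tightest incentive constraints and hence supports the \emph{most} cooperation.

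**Deriving the incentive constraints.**

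Next I would write down the one-shot deviation condition for cooperating in a given game $g$ under grim-trigger. When game $g$ arrives and a player contemplates deviating, the immediate gain is exactly $d_g$ (the deviation temptation, which by Property~\ref{property:star} is smallest against $C^{n-1}$, so the binding constraint is the on-path one where others cooperate). The cost is the lost future surplus from cooperation: reverting to permanent non-cooperation forgoes, in each future period, the expected per-period cooperation benefit $\sum_{g'} p_{g'} c_{g'}$ over those games in which cooperation was occurring, discounted from the next period onward. This yields the future value term $\tfrac{\delta}{1-\delta}\sum_{g'} p_{g'} c_{g'}$ when cooperation is sustained in all games, and $\tfrac{\delta}{1-\delta} p_g c_g$ when cooperation is sustained only in game $g$. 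Matching immediate temptation against discounted future loss gives precisely the inequalities \eqref{eq:full_cooperation} and \eqref{eq:partial_cooperation}. For total cooperation, the constraint must hold in \emph{both} games, and since the future value is common across games, the binding requirement is the larger temptation, $\max_g\{d_g\}$, explaining the max in \eqref{eq:full_cooperation}.

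**Establishing optimality and exhausting cases.**

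I would then argue that if full cooperation is incentive-compatible it is player-optimal, by Property~\ref{property:max}: total cooperation uniquely maximizes aggregate stage payoffs in every game, so it dominates any other on-path behavior whenever feasible. If \eqref{eq:full_cooperation} fails but \eqref{eq:partial_cooperation} holds for some $g$, then cooperation in that single game is the best attainable; here I would invoke Properties~\ref{property:last} and~\ref{property:star} to rule out asymmetric on-path play, so the only candidates are the four canonical cooperation patterns, and cooperation in the good-or-bad game with the feasible constraint strictly improves on no cooperation by Property~\ref{property:max}. Finally, if neither inequality holds for any $g$, no cooperation can be supported even with the harshest punishment, so the unique outcome is $N^n$ throughout.

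**The main obstacle.**

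The delicate step is the optimality direction rather than feasibility: showing that no cleverer equilibrium—e.g., one mixing partial cooperation across games, or using history-dependent reward schemes richer than grim trigger—can beat the grim-trigger benchmark. The key leverage is that, because current actions do not affect the distribution over future games (the stochastic game's special structure noted in the model), the continuation problem is stationary and the best punishment is exactly the minmax identified by Property~\ref{property:minmax}. This decouples the incentive constraints across games and lets me treat each game's cooperation decision as a separate threshold comparison against a common continuation value, so that the player-optimal equilibrium is pinned down by which of the threshold inequalities hold. Verifying that Properties~\ref{property:last} and~\ref{property:star} genuinely preclude any welfare-improving asymmetric or mixed configuration is where I expect the argument to require the most care.
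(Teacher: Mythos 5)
Your overall architecture matches the paper's: grim trigger with reversion to perpetual $N^n$ for the sufficiency direction, Property~\ref{property:max} for player-optimality of total cooperation when it is incentive-compatible, and Properties~\ref{property:last} and~\ref{property:star} to restrict attention to the canonical symmetric on-path patterns. The sufficiency half of your argument is complete and correct, and the incentive constraints you derive are exactly \eqref{eq:full_cooperation} and \eqref{eq:partial_cooperation}.

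The gap is in the necessity direction, which you explicitly defer (``where I expect the argument to require the most care''). Your stated leverage---that the harshest credible punishment is the minmax, so each game's cooperation decision reduces to ``a separate threshold comparison against a common continuation value''---is not sufficient. Worsening the punishment is only half of what bounds the equilibrium set: an arbitrary equilibrium could in principle promise a particular player a very large continuation \emph{reward} for cooperating in game $g'$, financed by asymmetric future play that favors that player, and nothing in the minmax observation rules this out. The paper closes this with an averaging step. By Property~\ref{property:star}, every player's stage gain from deviating is at least $d_{g'}$ at \emph{any} history where cooperation in $g'$ occurs with positive probability, not just against $C^{n-1}$. By Property~\ref{property:max}, the \emph{aggregate} continuation surplus above all-$N$ play is at most $\frac{n\delta}{1-\delta}(p_Gc_G+p_Bc_B)$, and by Properties~\ref{property:dom} and~\ref{property:minmax} the post-deviation continuation of each player is at least the minmax; hence there \emph{exists a player} $i$ whose continuation loss from deviating is at most the per-capita share $\frac{\delta}{1-\delta}(p_Gc_G+p_Bc_B)$. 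When \eqref{eq:full_cooperation} fails and $d_{g'}=\max_g\{d_g\}$, that player strictly prefers to deviate, which kills any equilibrium with cooperation in $g'$; the analogous single-game bound kills cooperation in $g$ when \eqref{eq:partial_cooperation} fails. Without this ``some player's promised reward is at most the per-capita aggregate bound'' step, your proof establishes what grim trigger can support but not that no richer reward scheme can do better, which is precisely the claim the lemma makes.
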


All proofs are in Appendix~\ref{appendix:proofs}.

The arguments of the proof follow standard repeated-game logic. When Inequality~\eqref{eq:full_cooperation} holds, the grim-trigger strategy `cooperate in all games unless some player did not cooperate in the past, at which point never cooperate in any game again' constitutes an equilibrium: no player has an incentive to deviate in either game, and players play the stage game Nash equilibria during the punishment stage. Furthermore, this equilibrium is player-optimal as full cooperation maximizes aggregate stage-game payoffs. Similarly, when \eqref{eq:full_cooperation} does not hold, but \eqref{eq:partial_cooperation} does for game $g$, then `cooperate in $g$ only unless any player deviated from doing so in the past, in which case do not cooperate in either game' constitutes an equilibrium: since \eqref{eq:full_cooperation} fails, players cannot cooperate in both games as otherwise they would deviate, and so their payoffs cannot be improved, now also leveraging that no cooperation maximizes aggregate stage-game payoffs excluding full cooperation. Lastly, since any equilibrium-path cooperation in both games requires \eqref{eq:full_cooperation}, and cooperation in game $g$ requires \eqref{eq:partial_cooperation}, when both conditions fail, players maximize payoffs by playing the stage Nash equilibrium. 

Visually depicting the different cases of equilibrium cooperation characterized in Lemma~\ref{lem:equilibrium_characterization} is useful. Figure~\ref{fig:basic_cases} shows which case occurs as a function of the probabilities of good, resp.\ bad, games occurring times $\delta/(1-\delta)$. Taking Inequality~\eqref{eq:full_cooperation} as an equality delineates the region where total cooperation prevails. Cooperation only in some game $g$ precludes total cooperation, so the former region must be southwest of the latter. For cooperation to be sustainable, Inequality~\eqref{eq:partial_cooperation} must hold; its equality delineates the region with (some) cooperation. Note: there cannot be cooperation only in game $g$ if the temptation to deviate in that game exceeds the deviation temptation in the other game; thus, only one such region exists. 

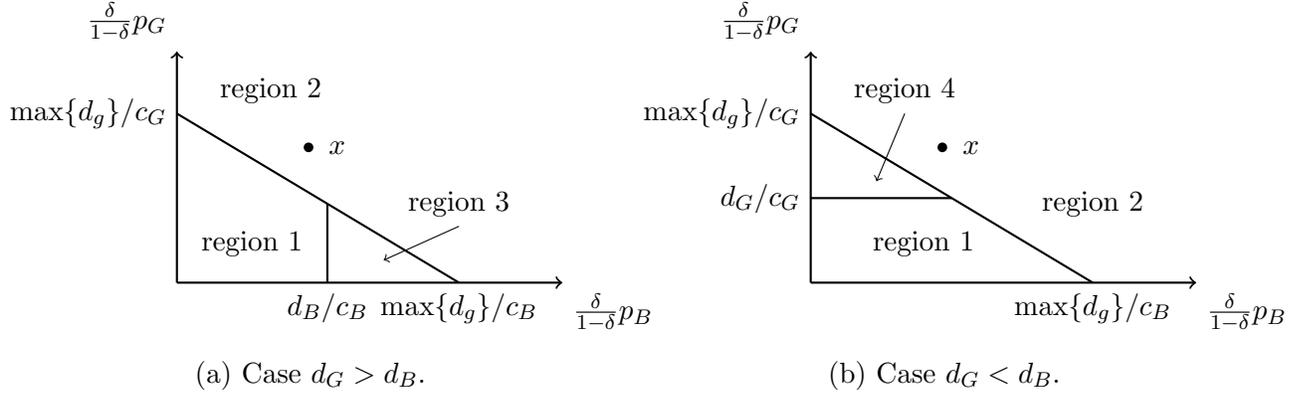
\begin{figure}[ht]
\small
\captionsetup{width=0.9\textwidth}
    \centering
    \begin{subfigure}[t]{0.5\textwidth}
        \centering
        \begin{tikzpicture}
        \def \xscale {.25} 
        \def \yscale {.15}
            \draw[thick,->] (0*\xscale,0*\yscale) -- (20.5*\xscale,0*\yscale) node[anchor=north west] {$\frac{\delta}{1-\delta}p_B$};
            \draw[thick,->] (0*\xscale,0*\yscale) -- (0*\xscale,20.5*\yscale) node[anchor=south east] {$\frac{\delta}{1-\delta}p_G$};
            \draw[thick] (0*\xscale,15*\yscale) node [anchor=east] {$\max\{{d_g}\}/c_G$} --(15*\xscale,0*\yscale) node [anchor=north] {$\max\{{d_g}\}/c_B$};
            \draw[thick] (8*\xscale,7*\yscale) -- (8*\xscale,0*\yscale) node [anchor=north] {$d_B/c_B$};
            % \filldraw[thick] (7*\xscale,12*\yscale) circle (.05 cm) node[anchor=west] {$\ x$};   
            \draw (5*\xscale,15*\yscale) node [anchor=south] {region 2};
            \draw (4*\xscale,3.5*\yscale) node[align=center] {region 1};
            \draw (15*\xscale,5*\yscale) node[anchor=south,align=center] {region 3};
            \draw[->] (15*\xscale,5*\yscale) -- (11*\xscale,2*\yscale);
        \end{tikzpicture}
        \caption{Case $d_G>d_B$.}
        \label{subfig:bad}
    \end{subfigure}%
    ~ 
    \begin{subfigure}[t]{0.5\textwidth}
        \centering
        \begin{tikzpicture}
        \def \xscale {.25} 
        \def \yscale {.15}
            \draw[thick,->] (0*\xscale,0*\yscale) -- (20.5*\xscale,0*\yscale) node[anchor=north west] {$\frac{\delta}{1-\delta}p_B$};
            \draw[thick,->] (0*\xscale,0*\yscale) -- (0*\xscale,20.5*\yscale) node[anchor=south east] {$\frac{\delta}{1-\delta}p_G$};
           \draw[thick] (0*\xscale,15*\yscale) node [anchor=east] {$\max\{{d_g}\}/c_G$} --(15*\xscale,0*\yscale) node [anchor=north] {$\max\{{d_g}\}/c_B$};
            \draw[thick] (7.5*\xscale,7.5*\yscale) -- (0*\xscale,7.5*\yscale) node [anchor=east] {$d_G/c_G$};
            % \filldraw[thick] (7*\xscale,12*\yscale) circle (.05 cm) node[anchor=west] {$\ x$};            
            \draw (15*\xscale,5*\yscale) node[anchor=south, align=center] {region 2};
            \draw (6*\xscale,3.5*\yscale) node[align=center] {region 1};
            \draw (5*\xscale,15*\yscale) node[anchor=south,align=center] {region 4};
            \draw[->] (5*\xscale,15*\yscale) -- (3.5*\xscale,9*\yscale);
        \end{tikzpicture}
        \caption{Case $d_G<d_B$.}
        \label{subfig:good}
    \end{subfigure}
    \caption{Different types of cooperation. In region~1, there is no cooperation.  In region~2, there is total cooperation. In region~3, there is cooperation in the bad game only.  In region~4, there is cooperation in the good game only.}
    \label{fig:basic_cases}
\end{figure}

Although the games are independent, their strategic interactions necessitate a holistic analysis to understand players' actions in one of them. For example, myopically improving the incentives for desirable actions in one type of situation may backfire, generating detrimental behaviors elsewhere. Similarly, collusive behavior may be minimized by eradicating situations where it occurs (decreasing $p_B$), reducing its benefits (decreasing $c_B$), or incentivizing team members to stop colluding or to report it (increasing $d_B$); although all these may serve their intended purpose of reducing collusion, they also can impair productive cooperation in other situations if incentives elsewhere are not adjusted to compensate. The dual of this warning highlights missed opportunities: A myopic design approach may overlook opportunities to reduce bad cooperation by overlooking opportunities to reduce the need for good cooperation. We discuss the implications of changing payoffs in greater detail in the online appendix.

\section{How Cooperation Depends on Team Durability}
\label{sec:designers_problem}

Given the characterization of equilibria, we can now examine how cooperation depends on different dimensions of organizational design. We begin with team durability. A team's durability is the (expected) time it stays together before teams are reshuffled and new ones formed. If team durability is low, i.e., teams are frequently reshuffled (at random times), future benefits from cooperation with current team members are reduced, and so teams become less likely to cooperate.\footnote{Reshuffling at random times will help sustain cooperation, as deterministic reshuffling has incentives unravel from the final date at which a team knows it will be together and thus cannot sustain any cooperation.} 

We model the reshuffling of teams (reducing team durability) as a probability $r$ with which the stochastic game of a team ends after each period. Given the continuum of workers, we assume that reshuffled players never face a past teammate, that there is no spreading of information, and that players only condition on histories with their current team.\footnote{This greatly simplifies the analysis; see, e.g., \cite{alim2016} for a treatment where such contagion is possible. Our simplification also requires sufficiently many workers so that completely new teams can be formed, as well as anonymity of workers outside their teams (deviations can be punished only by current team members). These conditions are more likely to hold---at least approximately---in large organizations, potentially leading to differences by organization size that we leave for future research. This can also be handled by hiring people part-time or by outsourcing and subcontracting tasks on short-term rotations.} Consequently, a reshuffling rate $r$ is equivalent to changing the discount factor $\delta$ to $(1-r)\delta$; essentially, the designer can induce any effective discount factor of at most $\delta$ by varying team durability. Thus, formally, we characterize optimal reshuffling by examining the player optimal equilibria for a given team (with histories only applying to that team) as a function of discount factor $(1-r)\delta$.

Our first result characterizes when reshuffling can lead to optimal cooperation. 

\begin{proposition}
\label{prop:optimal_reshuffling}
There exists a reshuffling probability $r$ such that there is optimal cooperation if and only if all of the following hold:
\begin{enumerate}
    \item The temptation to deviate is less in the good game: $d_G<d_B$;\label{cond:deviation}
    \item cooperation is sustainable in the good game by itself: $d_G\leq \frac{\delta}{1-\delta} p_Gc_G$;\label{cond:absolute}
    \item the good game's likelihood is high compared to that of the bad game: $\frac{p_Gc_G}{p_Bc_B}>\frac{d_G}{d_B-d_G}$.\label{cond:relative} 
\end{enumerate}
\end{proposition}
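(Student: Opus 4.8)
The plan is to use Lemma~\ref{lem:equilibrium_characterization} together with the paper's observation that a reshuffling rate $r$ is equivalent to replacing $\delta$ by the effective discount factor $\tilde\delta=(1-r)\delta$. As $r$ ranges over $[0,1)$, the designer can realize any $\tilde\delta\in(0,\delta]$, so writing $t\equiv\tilde\delta/(1-\tilde\delta)$—which is strictly increasing in $\tilde\delta$—the achievable values of $t$ are exactly $(0,T]$, where $T\equiv\delta/(1-\delta)$. Thus the statement reduces to: for which parameters does there exist $t\in(0,T]$ at which the player-optimal equilibrium exhibits cooperation only in $G$?

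Next I would translate ``optimal cooperation'' into inequalities on $t$. By Lemma~\ref{lem:equilibrium_characterization}, cooperation only in $G$ requires the partial-cooperation condition $d_G\le t\,p_Gc_G$ together with the failure of total cooperation, $\max_g\{d_g\}> t\,(p_Gc_G+p_Bc_B)$. Since $d_G\le t\,p_Gc_G\le t\,(p_Gc_G+p_Bc_B)$, the failure of total cooperation can only be caused by $d_B> t\,(p_Gc_G+p_Bc_B)$; this simultaneously forces $d_B>d_G$ and thereby guarantees (via the Lemma) that the sustained partial cooperation is in $G$ and not in $B$. Hence the outcome is optimal cooperation exactly when
\[
\frac{d_G}{p_Gc_G}\ \le\ t\ <\ \frac{d_B}{p_Gc_G+p_Bc_B}.
\]
The left inequality is weak (partial cooperation uses $\le$) and the right is strict (total cooperation must fail strictly); keeping this asymmetry straight is what ultimately produces the strict-versus-weak pattern among the three conditions.

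It then remains to decide when the half-open interval $\left[\frac{d_G}{p_Gc_G},\,\frac{d_B}{p_Gc_G+p_Bc_B}\right)$ meets $(0,T]$. Its left endpoint is positive (as $d_G,p_G,c_G>0$), so $t>0$ is automatic. The interval is nonempty iff $\frac{d_G}{p_Gc_G}<\frac{d_B}{p_Gc_G+p_Bc_B}$; cross-multiplying and rearranging gives $d_G\,p_Bc_B<(d_B-d_G)\,p_Gc_G$, which forces $d_B>d_G$ (Condition~\ref{cond:deviation}) and, after dividing by $d_B-d_G>0$, is equivalent to $\frac{p_Gc_G}{p_Bc_B}>\frac{d_G}{d_B-d_G}$ (Condition~\ref{cond:relative}). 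Given nonemptiness, the interval intersects $(0,T]$ iff its left endpoint does not exceed $T$, i.e.\ $\frac{d_G}{p_Gc_G}\le T$, which is precisely Condition~\ref{cond:absolute}; in that case $t=\frac{d_G}{p_Gc_G}$ is admissible, giving the ``if'' direction, while any feasible $t$ forces all three conditions, giving ``only if.''

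The core idea is elementary, so the main obstacle is bookkeeping of two interrelated subtleties. First, the single nonemptiness inequality $d_G\,p_Bc_B<(d_B-d_G)\,p_Gc_G$ must be decomposed into the two stated conditions: Condition~\ref{cond:deviation} cannot be folded into Condition~\ref{cond:relative}, because when $d_B<d_G$ the ratio $\frac{d_G}{d_B-d_G}$ is negative and Condition~\ref{cond:relative} would be vacuously satisfied even though the interval is empty—so the sign of $d_B-d_G$ must be controlled separately. Second, one must verify that the realized partial-cooperation outcome is genuinely ``only in $G$'' rather than ``only in $B$,'' which is exactly where Condition~\ref{cond:deviation} does double duty (selecting the correct game in the Lemma as well as making the interval nonempty). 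I would also briefly confirm that the edge behavior $r\to1$ (equivalently $t\to0^+$) introduces no spurious solutions, which it does not since the interval's left endpoint is bounded away from zero.
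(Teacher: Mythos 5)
Your proposal is correct and follows essentially the same route as the paper: both reduce reshuffling to an effective discount factor, apply Lemma~\ref{lem:equilibrium_characterization}, use $t=d_G/(p_Gc_G)$ as the witness, and the cross-multiplied nonemptiness inequality $d_G\,p_Bc_B<(d_B-d_G)\,p_Gc_G$ is exactly the paper's chain \eqref{eq:proof_optimal_reshuffling}. The only cosmetic difference is that you package necessity as emptiness of a single half-open interval, whereas the paper argues the failure of each condition case by case.
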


The temptation to deviate in good games must be no higher than that in bad games;  otherwise, whenever good cooperation is possible, then so is bad. Thus, condition~\ref{cond:deviation} is necessary. Cooperation being sustainable in the good game by itself (condition \ref{cond:absolute}) is also necessary because, for optimal cooperation, there are no strategic interdependencies across the games. Lastly, good games must have a sufficiently high probability, relative to bad games (condition~\ref{cond:relative}), to make it feasible (possibly by reshuffling) to induce players to cooperate in the good game but not in the bad one: it ensures that there exists an effective discount rate $(1-r)\delta$ such that Equation~\eqref{eq:full_cooperation} fails, while \eqref{eq:partial_cooperation} holds for the good game. The proposition then follows from a careful application of Lemma~\ref{lem:equilibrium_characterization}.

When any of the above conditions fail, there is a trade-off between two evils: accept corruption in bad games or face cooperation in neither game. If total cooperation is feasible then it is preferred to no cooperation if and only if $p_GV_G+p_BV_B>0$. In policing, for example, productive team cooperation is vital to society---$V_G$ is high. However, because police misconduct imposes significant harm, $V_B$ can also be large \citep{mollen1994}. Thus, whether frequent rotations or long-term partnerships are optimal depends on the relative frequency of situations in which (a) police need to back each other up or perform other important team duties versus (b) they can be corrupt or abusive.

In other settings---e.g., low-skilled jobs in retail, fast food, and warehouse industries---cooperation may increase profits only slightly ($p_GV_G$ is small), whereas employee collusion to steal products is a major cost for retailers ($p_BV_B$ is large in absolute terms; \cite{greenberg1996employee}). This helps to explain why retail teams, unlike police or military units, are reshuffled frequently. Lastly, consider sports teams and academics. While both professions have some potential for collusion \citep{muelheusserr2008}, the social cost may be outweighed by the value of cooperation, and so teamwork is often encouraged via design.

\section{Assigning Teams to Tasks}\label{sec:task_assignment}

In the previous sections, the arrival rates of good and bad games were fixed, and the only design dimension was the rate at which teams were reshuffled.  We now consider situations in which there are different tasks to which teams can be assigned, and the arrival rates of the two games differ across these tasks, thereby allowing an organizational designer to control arrival rates. For example, in a police force, the task of responding to emergencies might create infrequent but important situations in which backup is required (good games), whereas the task of traffic policing might present frequent opportunities to accept small bribes (bad games).

As a running illustration of optimal assignments in this and the next section, consider the following example.   

\begin{example}\label{example}
Payoffs are such that $\delta=3/5,c_G=c_B=1,d_G=1$, and $d_B=1/2$, and the arrival probabilities of games are $p_G=1/4$ and $p_B=3/8$.

In this setting, cooperation on good games is not sustainable. To see this, note that \eqref{eq:full_cooperation} is violated. That is, the payoff from deviating in a good game and not cooperating exceeds the maximum potential future payoff loss, presuming that there could be full cooperation on both games:
$$
d_G=1>\frac{15}{16}=\frac{\frac{3}{5}}{1-\frac{3}{5}}\cdot\left( \frac{1}{4}\cdot 1 + \frac{3}{8}\cdot 1 \right)=\frac{\delta}{1-\delta}(p_Gc_G+p_Bc_B).
$$

Suppose, however, that the relative arrival rates of good and bad games can be adjusted. In particular, there are two tasks that have to be covered with equal weights---at any point in time, half of the teams need to cover each---and on one task, (only) good games arrive at a rate of $a_G=1/2$, and on the other task (only) bad games arrive at a rate of $a_B=3/4$. If a team is randomly assigned to each task in each period with equal probability, then this satisfies the task coverage constraint and gives back $p_G=1/4=\frac{1}{2} a_G$ and $p_B=3/8=\frac{1}{2}a_B$ as the arrival probabilities of future games. 

But now, instead of assigning a team to each task with equal probability, one could assign it to different mixtures of tasks. Ideally, assigning it entirely to the task with good games would sustain cooperation. Then, other teams could be assigned to the task with bad games (so that each task is covered) and reshuffled to preclude cooperation, leading to optimal cooperation.

However, note that in this example, if one assigns a team entirely to the task with only good games, that would also not lead to cooperation on good games since future good games alone do not provide enough incentives to cooperate as the incentives to deviate exceed any benefits from future cooperation:
$$
d_G=1>\frac{3}{4}=\frac{\frac{3}{5}}{1-\frac{3}{5}}\cdot \frac{1}{2}\cdot 1=\frac{\delta}{1-\delta}a_Gc_G
$$
(a condition which we later formalize in Observation~\ref{obvs:full_specialization}).

Thus, cooperation on good games cannot be sustained with the original default arrival rates, nor with assignment to just the task with only good games. Instead, what is necessary to sustain cooperation is an assignment that includes a sufficient stream of future \emph{bad} games that then provide rewards and incentives not to deviate in a current good game. Indeed, there is cooperation on both games when teams face a high enough share of future bad games since $\max_g\{d_g\} = 1 < \frac{9}{8}= \frac{\frac{3}{5}}{1-\frac{3}{5}}\cdot \frac{3}{4}\cdot 1 = \frac{\delta}{1-\delta} a_Bc_B$. 

To illustrate, consider a task assignment in which a team is assigned to the task with only good games with probability $\nu_G$ and to the task with only bad games with probability $\nu_B=1-\nu_G$.  The assignment that provides enough incentives to sustain cooperation in good games, while minimizing the share of bad games, solves
\begin{equation}\label{eq:example_coop}
\max\left\{\frac{1}{2},1\right\}=\frac{\frac{3}{5}}{1-\frac{3}{5}}\left(\nu_G\cdot \frac{1}{2} \cdot 1+\nu_B\cdot \frac{3}{4}\cdot 1\right),
\end{equation}
and is thus given by $\nu_G=1/3$ and $\nu_B=2/3$. 
\end{example}

What Example~\ref{example} shows is that if the organizational designer has additional flexibility to steer a team towards tasks that involve more good or bad games, then they can find mixes that sustain cooperation with minimal arrivals of bad games.  

The example illustrates just one part of the overall design problem.  The designer still has to cover all tasks with teams, that is, in the previous example, achieve the 50--50 assignment of teams to tasks on average.  So, by tilting some teams toward this optimal mix, some tasks will not be fully covered.  Those can be covered with other teams, but ones in which the designer simply gives up on motivating cooperation. As a result, some teams end up incentivized to cooperate, while others are left without cooperation---the assignment will often be asymmetric.  Exactly which assignment ends up being optimal depends on the setting: the relative arrival rates, payoffs, and the required task coverage.  

We model this additional design option as follows. There is a finite set of tasks $\mathcal{T}$. Each task $t$ has its own probability of good and bad games, $p_G(t)$ and $p_B(t)$. An \emph{assignment} for a team, denoted $\nu$, is a probability measure on $\mathcal{T}$ that specifies the probability with which the team is assigned to each task, with the realized task redrawn in every period. For example, if there are two tasks, responding to emergencies vs.\ traffic policing, then $\nu$ indicates the probability that the team is assigned to respond to emergencies as well as the (remaining) probability that it is assigned to traffic policing. 

An organization's \emph{team structure} is a probability measure $\lambda$ over assignments (so  $\lambda\in \Delta\left(\Delta(\mathcal{T})\right)$), together with a reshuffling rate $r_\nu$ for each $\nu \in \supp(\lambda)$. Thus, a team structure indicates the fraction of the population that gets each type of assignment. 

A team's history at time $\tau$ is now augmented with the history of assigned tasks, i.e., it is a vector $[(t_{\tau'},g_{\tau'},\mathbf{a}_{\tau'})]^{\tau-1}_{\tau'=0}$ where $t_{\tau'}$ is the task faced at time $\tau'$. Let $\mathcal{H}_\tau$ denote all such histories and $\mathcal{H}\equiv \cup_\tau \mathcal{H}_\tau$. A player's strategy indicates what the player who falls under a particular $\nu$ does as a function of the history, i.e., $\sigma:\mathcal{H}\times \{G,B\} \rightarrow \Delta(C,N)$, similar to before.  Again, we track histories only within a team and consider equilibria that depend only on current team histories.

In our analysis, only a finite number of assignments are needed, and so we treat $\lambda$ as having finite support. With a continuum of agents and a finite set of assignments, agents can be reshuffled, as in the previous section, to a new team that is given the same assignment.  Again, to avoid technical clutter, we do not model the reshuffling explicitly as a continuum-matching process but simply capture it via a reduced discount rate.

The team structure must respect an exogenous constraint that specifies the fraction of the organization's teams assigned to a given task at any given instant, represented by $q_t$, where $\sum_{t\in \mathcal{T}} q_t=1$. In other words, the organization's team structure $\lambda$ must satisfy
\begin{equation}\label{eq:balance}
    \sum_{\nu \in \supp(\lambda)}\lambda(\nu) \nu(t)  =  q_t \quad \text{for all }t\in \mathcal{T}.
\end{equation}

For reactive assignment $\nu$, the average arrival probabilities of the different games are given by  $p_g(\nu) \coloneqq \sum_{t\in\mathcal{T}} \nu(t)\, p_g(t)$. For $p_G(\nu),p_B(\nu)>0$, equilibrium play is still characterized Lemma~\ref{lem:equilibrium_characterization}, that is, there are still the four types of canonical cooperation, and, of course, when only one game has a positive probability of arrival, then cooperation is sustained if and only if $d_G\leq \frac{\delta}{1-\delta}p_g(\nu)c_g$. This implies a well-defined social value. For any team structure $\lambda$, together with reshuffling rate $r_\nu$ for each $\nu \in \supp(\lambda)$, let $\chi_g(\nu)\in\{0,1\}$ denote the
equilibrium indicator of cooperation in game $g\in\{G,B\}$ under assignment
$\nu$. 
The social value generated by team structure $\lambda$ is then
$$
\sum_{\nu\in\supp(\lambda)} \lambda(\nu)
\sum_{g\in\{G,B\}}
p_g(\nu)\,\chi_g(\nu)\,V_g.
$$
This can be taken as an objective function for the designer in choosing $\lambda$.

This model nests the model from our previous sections, where there was just one task $t$ with $q_t=1$, all teams were assigned to $\nu$ with $\nu(t)=1$, and the designer's only decision was the reshuffling rate $r_\nu$. We are now considering the more general case in which there are several tasks and discretion over how teams are assigned to those tasks.\footnote{The previous model is also nested if interpreted as there being several tasks, but the designer has no discretion over task assignments; i.e., they must use the same assignment for each team, namely $\nu$ with $\nu(t)=q_t$.}

The essential insights are easy to see in the case of just two tasks, as in Example~\ref{example}, and so we work with that case. Specifically, let $\mathcal{T}=\{t_G,t_B\}$ with corresponding fractions $q_G,q_B$ of teams that must cover each task (we write $q_G,q_B$ instead of $q_{t_G},q_{t_B}$ for ease of notation). For simplicity, we further assume that task $t_G$ has an arrival probability $a_G= p_G(t_G)>0$ of the good game $G$ and no probability of $B$, and correspondingly, task $t_B$ has an arrival probability $a_B= p_B(t_B)>0$ of the bad game $B$ and no probability of $G$.\footnote{Nothing substantive changes with tasks that are not extremal; i.e., when they generate both good and bad games. 
The designer has less flexibility in shaping the stream of games faced by a team, since assignments cannot fully separate the two types.  Similarly, introducing more than two tasks with various heterogeneous arrival rates complicates the full characterization of the optimal assignment, but not the basic forces that we analyze of trading off more social and less socially desirable arrival rates to generate cooperation with minimal relative amounts of cooperation in bad games.} We therefore refer to tasks $t_G$ and $t_B$ as ``good'' and ``bad'' tasks, respectively. This is not to say that completing the bad task is harmful to the organization; in fact, a share $q_B$ of teams must be covering that task.  The ``bad'' simply refers to the fact that cooperation, compared to noncooperation on that task, is worse for society. For example, even though traffic policing might present frequent opportunities to accept bribes, it is valuable to society and must be done, so there is an exogenous organizational constraint specifying the share of the population that must be assigned to it. 

An organization with only generalists---i.e., where every team faces the same set of tasks---is captured by a single reactive assignment $\nu^{gen}$ with $\nu^{gen}(t_G)=q_G$ and $\nu^{gen}(t_B)=q_B$. Another team structure is full specialization, with one assignment being to only the good task, $\nu^G(t_G)=1$, and another assignment being to only the bad task, $\nu^B(t_B)=1$ (task coverage then requires $\lambda(\nu^G)=q_G$ and $\lambda(\nu^B)=q_B$). In this case, the designer optimally keeps teams on the good task together ($r_G=0$; again, we write $r_G$ instead of $r_{\nu^G}$ for simplicity, as we do for $r_B$) and reshuffles those on the bad task ($r_B=1$). This specialization of teams overcomes two of the three conditions required for optimal cooperation from Proposition~\ref{prop:optimal_reshuffling} (conditions~\ref{cond:deviation} and~\ref{cond:relative}): First, greater temptation to deviate in good than bad games still means that desirable cooperation leads to collusion; however, the only teams that encounter bad games are frequently reshuffled and hence never cooperate. Second, good and bad games are fully separated; hence, there are no spillovers constraining the relative frequencies of good and bad games. Also, because cooperation becomes easier when interactions occur more frequently, the condition for optimal cooperation with full specialization on good games is weaker than its analog (condition~\ref{cond:absolute}). Thus, full specialization is an optimal team structure if cooperation can be sustained on the good task by itself. This is easy to see, and so we simply state it as an observation.

\begin{observation}\label{obvs:full_specialization}
Full specialization, together with frequent enough reshuffling of those teams assigned to the bad task and no reshuffling of teams assigned to the good task, leads to cooperation only on good games if and only if $d_G\leq \frac{\delta}{1-\delta}a_G c_G$.
\end{observation}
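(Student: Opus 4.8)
The plan is to decompose the fully specialized organization into two independent sub-populations and apply Lemma~\ref{lem:equilibrium_characterization} (in its degenerate one-game form) to each. Under full specialization the good-task teams face only the good game, which arrives with probability $a_G$ each period, and never the bad game; the bad-task teams face only the bad game, arriving with probability $a_B$, and never the good game. Because players condition only on their own team's history and there is no cross-team information, the organization-wide equilibrium factors into separate equilibria for the two team types, and each is a single-game repeated interaction. The statement ``cooperation only on good games'' is then precisely the conjunction of two requirements: good-task teams cooperate whenever $G$ arrives, and bad-task teams never cooperate when $B$ arrives. I would check these two requirements in isolation.

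First I would dispose of the bad-task teams. As established in Section~\ref{sec:designers_problem}, reshuffling at rate $r_B$ is equivalent to replacing the discount factor $\delta$ by $(1-r_B)\delta$. Specializing the sustainability condition~\eqref{eq:partial_cooperation} to a team that can only encounter $B$ (effective probabilities $p_B=a_B$, $p_G=0$), cooperation in $B$ is sustainable iff $d_B\leq \frac{(1-r_B)\delta}{1-(1-r_B)\delta}\,a_B c_B$. Taking $r_B$ large enough that the right-hand side falls below $d_B$ — in particular $r_B=1$, which drives the effective discount factor to zero and makes non-cooperation strictly dominant by Property~\ref{property:dom} — makes this condition fail. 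Hence ``frequent enough reshuffling'' guarantees no cooperation in the bad game regardless of the good game's parameters, so the second requirement holds by design.

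Next I would treat the good-task teams, for which $r_G=0$ and the discount factor is therefore $\delta$. Since the bad game never arrives for these teams, the only operative condition from Lemma~\ref{lem:equilibrium_characterization} is the sustainability of cooperation in the single game $G$, namely~\eqref{eq:partial_cooperation} with $p_G=a_G$, which reads $d_G\leq \frac{\delta}{1-\delta}\,a_G c_G$. When it holds, the player-optimal equilibrium has these teams cooperating in $G$: grim-trigger on the lone game $G$ is an equilibrium and is player-optimal since full cooperation maximizes aggregate stage payoffs. When it fails, no cooperation in $G$ can be sustained even at the most favorable discount factor $\delta$ (any reshuffling would only tighten the constraint), so the first requirement is violated. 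Combining the two steps, cooperation only on good games obtains if and only if $d_G\leq \frac{\delta}{1-\delta}\,a_G c_G$, which is the claimed equivalence; the ``only if'' direction is the contrapositive of the failure case just described, and the ``if'' direction combines cooperation in $G$ with the guaranteed non-cooperation in $B$.

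The only real subtlety — and the step I would be most careful about — is the decomposition in the first paragraph: verifying that, given the model's assumptions (histories tracked only within a team, no contagion across teams), the organization-wide player-optimal equilibrium genuinely is the product of the two single-game player-optimal equilibria, so that the requirements can be checked separately. Once this is granted, the result is an immediate specialization of Lemma~\ref{lem:equilibrium_characterization} to the degenerate one-game cases ($p_B=0$ for good-task teams, $p_G=0$ for bad-task teams), together with the reshuffling-as-discounting equivalence from Section~\ref{sec:designers_problem}.
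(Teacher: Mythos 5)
Your proposal is correct and follows essentially the same route as the paper's proof: treat the good-task and bad-task teams separately, kill cooperation on the bad task by reshuffling (effective discount factor zero), and reduce the good-task teams to the single-game sustainability condition $d_G\leq \frac{\delta}{1-\delta}a_G c_G$, with grim-trigger for the ``if'' direction and a profitable-deviation argument for the ``only if'' direction. The only cosmetic difference is that the paper re-derives the single-game case from Properties~\ref{property:max}--\ref{property:star} rather than citing Lemma~\ref{lem:equilibrium_characterization} in a degenerate form (whose statement technically assumes $p_g>0$ for both games), but the underlying argument is identical.
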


The more interesting case is one where the full-specialization team structure does not yield such optimal cooperation; therefore, we investigate situations in which sustaining cooperation in the good game can only be done if there is a high-enough probability of future bad games ($d_G> \frac{\delta}{1-\delta}a_G c_G$ and $\max_g \{d_g\} < \frac{\delta}{1-\delta} a_B c_B$). In these situations, the optimal team structure may involve creating two types of team assignments: one that includes the bad task with the minimum share needed to sustain total cooperation (as in \eqref{eq:example_coop} in Example~\ref{example}), and another that ensures all tasks are covered according to the exogenous constraint given in \eqref{eq:balance}.

Thus, consider assignment $\nu^{coop}$, which generalizes \eqref{eq:example_coop} in Example~\ref{example}, that places just enough weight on the bad task to incentivize cooperation in both games, i.e.,
\begin{equation}
\label{hybrid}
\max_g \{d_g\} = \frac{\delta}{1-\delta} \left(\nu^{coop}(t_G) a_G c_G + (1-\nu^{coop}(t_G)) a_B c_B\right).
\end{equation}

Either assignment $\nu^{coop}$ is part of an optimal team structure, or the social cost of bad cooperation is too large and no cooperation is optimal. In the former case, the share of the population with assignment $\nu^{coop}$ is maximized, so that, under the assignment, teams face only good or only bad tasks and are constantly reshuffled to preclude cooperation. The next proposition formalizes this. 

\begin{proposition}\label{prop:task_assignment}
Suppose that $d_G> \frac{\delta}{1-\delta}a_G c_G$ and $\max_g \{d_g\} < \frac{\delta}{1-\delta} a_B c_B$. Then either (i) the optimal team structure $\lambda$ has positive mass on at most two assignments, namely
\begin{itemize}
    \item $\nu^{coop}$ (defined in Equation \ref{hybrid}) with $r_{\nu^{coop}}=0$ and
    \item $\nu^{g}$ with $\nu^{g}(t_{g})=1$, for some $g\in \{G,B\}$, and $r_g$ sufficiently high to preclude cooperation if $g=B$,\footnote{If $g=G$, then there is no cooperation for any reshuffling rate since we are considering settings in which bad tasks are needed to sustain cooperation in good games.}
\end{itemize}
or (ii) any team structure having only assignments that lead to no cooperation is optimal.\footnote{Note that the relative weights on the two assignments in case (i) are fully pinned down: the optimal team structure $\lambda$ satisfies $\lambda(\nu^{coop}) \nu^{coop}(t_{g'}) = q_{g'}$  and $\lambda(\nu^{g})+\lambda(\nu^{coop})\nu^{coop}(t_{g})=q_{g}$, where $g' \in \{G,B\}$ and $g'\neq g$.}
\end{proposition}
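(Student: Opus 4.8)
The plan is to parametrize each assignment by $x=\nu(t_G)$, so a team with assignment $\nu$ faces effective arrival probabilities $p_G=x\,a_G$ and $p_B=(1-x)a_B$, and, by Lemma~\ref{lem:equilibrium_characterization} applied with effective discount factor $(1-r_\nu)\delta$, contributes per unit mass a social welfare of $x\,a_G V_G+(1-x)a_B V_B$ under total cooperation, $(1-x)a_B V_B$ under cooperation only in $B$, $x\,a_G V_G$ under cooperation only in $G$, and $0$ under no cooperation. The designer maximizes the $\lambda$-average of these contributions subject to the balance constraint $\sum_\nu\lambda(\nu)\,x_\nu=q_G$ (which, with $q_G+q_B=1$, is the only binding constraint). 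This casts the problem as a linear program, and the strategy is to eliminate two of the four regimes and then solve what remains by a planar convex-hull argument.

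First I would discard the two irrelevant regimes. Since $\tfrac{\delta'}{1-\delta'}$ is increasing in $\delta'$ and $(1-r)\delta\le\delta$, while $p_G=x\,a_G\le a_G$, the hypothesis $d_G>\tfrac{\delta}{1-\delta}a_G c_G$ forces $d_G>\tfrac{(1-r)\delta}{1-(1-r)\delta}p_G c_G$ for \emph{every} assignment and reshuffling rate; hence condition~\eqref{eq:partial_cooperation} for $g=G$ never holds and cooperation only in $G$ is unattainable anywhere. Cooperation only in $B$ yields welfare $(1-x)a_B V_B\le 0$; raising the reshuffling rate on that same assignment until the effective discount kills all cooperation lifts its welfare to $0$ while leaving $\lambda$, and hence the balance constraint, untouched—so it is weakly dominated. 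Thus, without loss, every assignment in an optimal structure yields either total cooperation or no cooperation. Total cooperation at $r=0$ is sustainable iff $\max_g\{d_g\}\le\tfrac{\delta}{1-\delta}(x\,a_G c_G+(1-x)a_B c_B)$; the two hypotheses give $a_G c_G<a_B c_B$, so this right-hand side strictly decreases in $x$ and meets $\max_g\{d_g\}$ at a unique $x^\ast=\nu^{coop}(t_G)\in(0,1)$, i.e.\ total cooperation is feasible exactly for $x\le x^\ast$ (reshuffling only lowers the effective discount and cannot enlarge this range). Writing $W(x)=x\,a_G V_G+(1-x)a_B V_B$, which is affine and strictly increasing because $V_G>0>V_B$, the attainable (coverage, welfare) pairs form
\[
S=\{(x,W(x)):0\le x\le x^\ast\}\cup\{(x,0):0\le x\le 1\},
\]
a union of two segments, and the designer solves $\max\E[s]$ over distributions on $S$ subject to $\E[x]=q_G$.

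The core step is the optimization, carried out geometrically: the optimal value is $\max\{s:(q_G,s)\in\conv(S)\}$, the top of $\conv(S)$ above $x=q_G$, and since $W$ is affine this upper boundary is piecewise linear with vertices among $(0,0)$, $(x^\ast,W(x^\ast))$, and $(1,0)$, so by Carathéodory in the plane an optimum is a mixture of at most two of them. If $W(x^\ast)>0$, the vertex $(x^\ast,W(x^\ast))$ lies strictly above the axis, and for interior $q_G$ the boundary is the segment from $(x^\ast,W(x^\ast))$ to $(0,0)$ when $q_G\le x^\ast$—mixing $\nu^{coop}$ (total cooperation, $r=0$) with the pure bad-task assignment $\nu^B$ held at no cooperation by reshuffling ($g=B$)—and to $(1,0)$ when $q_G\ge x^\ast$—mixing $\nu^{coop}$ with the pure good-task assignment $\nu^G$ ($g=G$); the two masses are then pinned down by $\E[x]=q_G$, reproducing exactly the weights in the footnote. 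This is case (i). If instead $W(x^\ast)\le 0$, every point of $S$ has $s\le 0$ with equality only on the axis, so welfare $0$ is optimal and is attained by any all-no-cooperation structure, which is case (ii).

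The main obstacle is the regime-elimination step: one must verify that the two hypotheses make cooperation only in $G$ \emph{genuinely} impossible across all assignments and all reshuffling rates—so that no interior task mix can accomplish what full specialization (Observation~\ref{obvs:full_specialization}) fails to—and that cooperation only in $B$ is always weakly dominated by reshuffling it away. Only after these reductions does the feasible set collapse to the two-segment region whose solution is forced, by the affineness of $W$ and planar Carathéodory, to be a mixture of $\nu^{coop}$ with a single reshuffled pure-task assignment. A secondary point to confirm carefully is the monotonicity claim that reshuffling can only shrink the total-cooperation region, which is what makes $x\le x^\ast$ an exact characterization rather than merely sufficient.
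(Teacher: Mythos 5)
Your proof is correct, and it takes a genuinely different route from the paper's. The paper argues by direct perturbation: it takes an arbitrary optimal structure, shows that any assignment supporting cooperation must have $r=0$ and must involve total cooperation (optimal cooperation is ruled out by $d_G>\tfrac{\delta}{1-\delta}a_Gc_G$, and cooperation only in $B$ by reshuffling it away), and then shows $\nu(t_G)>\nu^{coop}(t_G)$ kills total cooperation while $\nu(t_G)<\nu^{coop}(t_G)$ can be strictly improved by splitting $\nu$ into $\nu^{coop}$ and a reshuffled $\nu^B$; a final exchange argument forces the non-cooperating mass onto a single pure task. You instead reduce the problem to a one-dimensional linear program over the attainable (coverage, welfare) pairs and solve it by a planar Carath\'eodory/convex-hull argument, after the same two regime eliminations (cooperation only in $G$ is infeasible for every $x$ and $r$ because $x a_G\le a_G$ and reshuffling only shrinks the effective discount; cooperation only in $B$ is weakly dominated since $V_B<0$). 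Your approach makes transparent \emph{why} two assignments suffice and why the kink sits exactly at $\nu^{coop}$: it is the unique vertex of the upper concave envelope, and it buys the footnote's weights for free from $\mathrm{E}[x]=q_G$. One small point to tighten: as written, Carath\'eodory gives you that \emph{some} optimum is a two-point mixture, whereas the proposition (and the paper's proof) asserts this of \emph{any} optimal structure in case (i); this follows immediately in your framework by noting that an optimal distribution must be supported on the face of $\conv(S)$ exposed by the supporting line at $(q_G,s^\ast)$, and since $W(0)=a_BV_B<0$ strictly, that line meets the cooperative segment only at $(x^\ast,W(x^\ast))$ and the axis only at the relevant endpoint. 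With that one-line addition the argument is complete and fully equivalent in strength to the paper's.
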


Let us revisit Example~\ref{example} and explicitly derive the optimal team structure for these parameters.

\begin{continuance}{example}
A team structure $\lambda$ with positive weight on $\nu^{coop}$ can be optimal only if the social value coming from $\nu^{coop}$ is positive, i.e., if $\nu^{coop}(t_G)a_GV_G+\nu^{coop}(t_B)a_BV_B\geq 0$. In Example~\ref{example}, we derived $\nu^{coop}$ to be $\nu^{coop}(t_G)=1/3$ and $\nu^{coop}(t_B)=2/3$. Since $a_G=1/2$ and $a_B=3/4$, we are in case (i) of the preceding proposition if $\frac{1}{3}\cdot \frac{1}{2}\cdot V_G+\frac{2}{3}\cdot \frac{3}{4}\cdot V_B> 0$, i.e., $V_G> -3V_B$. (And if $V_G< -3V_B$, then any optimal team structure precludes all cooperation; if $V_G=-3V_B$, then optimal team structures have weight on $\nu^{coop}$ or on assignments that preclude cooperation.)

Suppose $V_G > - 3V_B$ and so Proposition~\ref{prop:task_assignment} implies that optimal team structure $\lambda$ puts maximal weight on $\nu^{coop}$ subject to satisfying the exogenous constraint on task coverage \eqref{eq:balance}. Recall that half the teams need to be assigned to the good task and half to the bad task, i.e., $q_G=q_B=1/2$. Hence, to maintain task coverage, the designer assigns some teams to only good tasks, $\nu^G$; Panel~(a) of Figure~\ref{fig:required_bad} depicts the relevant assignments. The relative weights of $\nu^{coop}$ and $\nu^G$ are found by noting that to cover all tasks, $\lambda$ must satisfy 
$
\lambda(\nu^{coop})\nu^{coop}(t_B)=q_B
$, giving $\lambda(\nu^{coop})=3/4$ (which implies $\lambda(\nu^{G})=1/4$; and there is no cooperation for any $r_{G}$).

For other task coverage constraints, the optimal team structure changes. For instance, suppose that at any given point in time, only 25\% of teams need to be assigned to the good task and now 75\% to the bad task, i.e., $q_G=1/4$ and $q_B=3/4$. Again, the designer assigns some teams to $\nu^{coop}$, reducing the share of bad tasks relative to the generalist assignment, and others now to $\nu^B$ to maintain task coverage; Panel~(b) of Figure~\ref{fig:required_bad} illustrates. The optimal team structure then satisfies 
$
\lambda(\nu^{coop})\nu^{G}(t_G)=q_G
$, giving $\lambda(\nu^{coop})=3/4$ (which implies $\lambda(\nu^{B})=1/4$; $r_{\nu^B}$ is sufficiently high to preclude cooperation).\footnote{To preclude cooperation on $\nu^B$, we require 
$
d_B > \frac{(1-r_{\nu^B})\delta}{1-(1-r_{\nu^B})\delta}a_Bc_B$,
i.e.,
$
\frac{1}{2} > \frac{(1-r_{\nu^B})\frac{3}{4}}{1-(1-r_{\nu^B})\frac{3}{4}}\cdot \frac{3}{4} \cdot 1,
$
which simplifies to 
$
r_{\nu^B} > \frac{7}{15}$.}

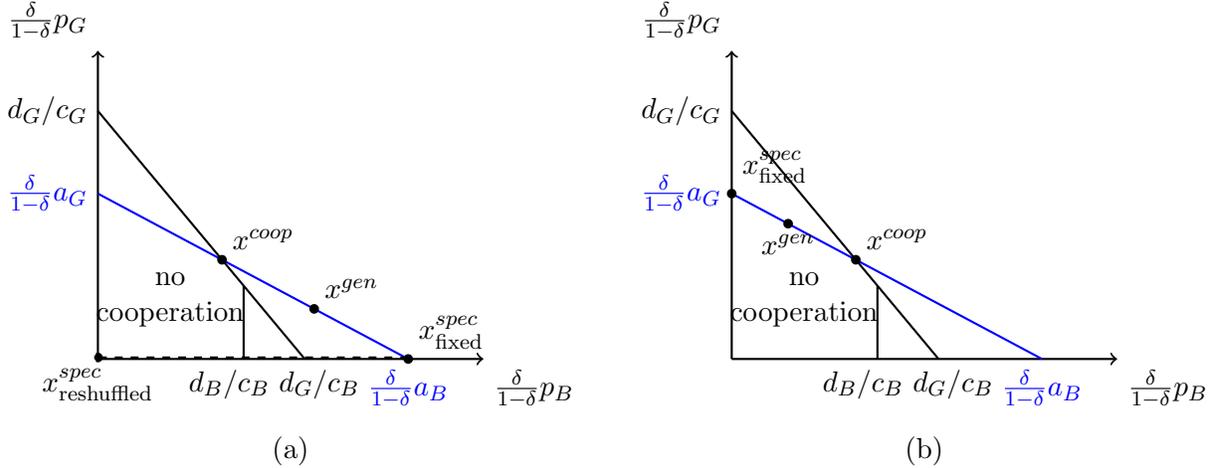
\begin{figure}[ht]
\small
\captionsetup{width=0.9\textwidth}
    \centering
    \begin{subfigure}[t]{0.5\textwidth}
        \centering
        \begin{tikzpicture}
            \def \xscale {.25}
            \def \yscale {.2}
            \draw[thick,->] (0*\xscale,0*\yscale) -- (20.5*\xscale,0*\yscale) node[anchor=north west] {$\frac{\delta}{1-\delta}p_B$};
            \draw[thick,->] (0*\xscale,0*\yscale) -- (0*\xscale,20.5*\yscale) node[anchor=south east] {$\frac{\delta}{1-\delta}p_G$};
            \draw[thick] (0*\xscale,16.5*\yscale) node[anchor=east] {$d_G/c_G$}--(11*\xscale,0*\yscale)  node[anchor=north,xshift=.5em] {$d_G/c_B$};
            \draw[thick,color=blue] (0*\xscale,11*\yscale) node[anchor=east] {$\frac{\delta}{1-\delta}a_G$} --(6.6*\xscale,6.6*\yscale);
           \draw[thick,color=blue] (6.6*\xscale,6.6*\yscale)--(16.5*\xscale,0*\yscale) node[anchor=north] {$\frac{\delta}{1-\delta}a_B$};
            \filldraw[thick] (3*\xscale,11*\yscale-3*11/16.5*\yscale) circle (.05cm) node[anchor=north] {$\nu^{gen}$};
            \filldraw[thick] (6.6*\xscale,6.6*\yscale) circle (.05cm) node[anchor=south west,align=center] {$\nu^{coop}$};
            \filldraw[thick] (6.6*\xscale,6.6*\yscale) node[anchor=south,align=center,xshift=3.5em,yshift=1 em] {};
            \filldraw[thick] (0*\xscale,11*\yscale) circle (.05cm) node[anchor=south west] {$\nu^{G}$};
            \draw (3.85*\xscale,1.5*\yscale) node[anchor=south,align=center] {no \\ cooperation};
            \draw[thick] (7.75*\xscale,16.5*\yscale-1.5*7.75*\yscale) -- (7.75*\xscale,0*\yscale) node [anchor=north,xshift=-.5 em] {$d_B/c_B$};
            % \filldraw[thick, dashed] (.1*\xscale,11*\yscale) -- (0.1*\xscale,0*\yscale) circle (.05cm) node [anchor=north] {$x^{spec}_{\text{reshuffled}}$};
        \end{tikzpicture}
    \caption{\label{fig:unproductive_all_good} }
    \end{subfigure}%
    ~ 
    \begin{subfigure}[t]{0.5\textwidth}
        \centering
        \begin{tikzpicture}
            \def \xscale {.25}
            \def \yscale {.2}
            \draw[thick,->] (0*\xscale,0*\yscale) -- (20.5*\xscale,0*\yscale) node[anchor=north west] {$\frac{\delta}{1-\delta}p_B$};
            \draw[thick,->] (0*\xscale,0*\yscale) -- (0*\xscale,20.5*\yscale) node[anchor=south east] {$\frac{\delta}{1-\delta}p_G$};
            \draw[thick] (0*\xscale,16.5*\yscale) node[anchor=east] {$d_G/c_G$}--(11*\xscale,0)  node[anchor=north,xshift=.5 em] {$d_G/c_B$};
            \draw[thick,color=blue] (0*\xscale,11*\yscale) node[anchor=east] {$\frac{\delta}{1-\delta}a_G$} --(6.6*\xscale,6.6*\yscale);
           \draw[thick,color=blue] (6.6*\xscale,6.6*\yscale)--(16.5*\xscale,0*\yscale) node[anchor=north] {$\frac{\delta}{1-\delta}a_B$}; 
            \filldraw[thick] (11.5*\xscale,11*\yscale-2/3*11.5*\yscale) circle (.05cm) node[anchor=south west] {$\nu^{gen}$};
            \filldraw[thick] (6.6*\xscale,6.6*\yscale) circle (.05cm) node[anchor=south west,align=center] {$\nu^{coop}$};
            \filldraw[thick] (6.6*\xscale,6.6*\yscale) node[anchor=south,align=center,xshift=3.5 em,yshift=1 em] {};
            \filldraw[thick] (16.5*\xscale,0*\yscale) circle (.05cm) node[anchor=south west] {$\nu^{B}$};
            \draw (3.85*\xscale,1.5*\yscale) node[anchor=south,align=center] {no \\ cooperation};
            \draw[thick] (7.75*\xscale,16.5*\yscale-1.5*7.75*\yscale) -- (7.75*\xscale,0*\yscale) node [anchor=north,xshift=-.5 em] {$d_B/c_B$};
            \filldraw[thick, dashed] (16.5*\xscale,.1*\yscale) -- (0*\xscale,.1*\yscale) circle (.05cm) node [anchor=north] {$\nu^{B}_{\text{reshuffled}}$};
        \end{tikzpicture}
    \caption{\label{fig:unproducitve_all_bad} }
    \end{subfigure}%
        \caption{Optimal team structure candidates when bad games are necessary to sustain cooperation.}
 \label{fig:required_bad}\end{figure}
\end{continuance}

This discussion suggests that organizations may benefit by specializing their teams and reshuffling them at different rates. In the case of police officers tasked with either responding to emergencies or traffic policing, good cooperation among emergency responders---backing up one's partner in dangerous situations---may be supported by keeping officers together, and bad cooperation---colluding in taking bribes---can be stopped by reshuffling partnerships often.

\section{Dynamic, Reactive Assignments}
\label{sec:reactive}

In Section~\ref{sec:task_assignment}, teams could be (randomly) assigned to different tasks, but the assignment probabilities were the same in each period. We now further enrich the analysis to allow the designer to use assignments that change over time and possibly react to the past arrival process of games. 

As we show, a designer can improve incentives and subsequently the social value by using a time-varying assignment strategy. The basic idea is as follows. Suppose that to provide incentives to cooperate in good games, a team needs to expect to see a certain amount of bad games in the future. That is, they get more motivation from the anticipation of bad game arrivals than from good ones.  Under a random mixture, the possibility of seeing good games dampens the incentives from bad games. If, instead, once a team faces a good game, the designer promises to switch the team immediately to a task with only bad games for some time, then incentives to cooperate improve compared to having the team continue facing the mixture.

To illustrate this idea in more detail, consider again the parameterization in Example~\ref{example}.

\begin{continuance}{example}
Recall that $a_G=1/2,a_B=3/4$ and $c_G=c_B=1$. Hence, $a_Bc_B>a_Gc_G$, and so a future bad task provides stronger incentives to cooperate today than a future good task. (This is a general implication of our maintained assumption that $d_G>\frac{\delta}{1-\delta}a_Gc_G$ and $\max_g \{d_g\}<\frac{\delta}{1-\delta}a_Bc_B$.) 

The designer can use this fact to maximize incentives when they are most needed. Specifically, instead of using a fixed random assignment of $\nu^{coop}$ with $\nu^{coop}(t_G)=1/3$ and $\nu^{coop}(t_B)=2/3$, suppose the designer uses a carefully orchestrated sequential assignment,  so that the good task is always followed by two bad tasks and this then repeats---i.e., $t_G,t_B,t_B,t_G,t_B,t_B,t_G,t_B,t_B,\dots$. Crucially, this provides stronger incentives to cooperate on good games, while the same fraction of bad tasks is covered. If, in addition, $d_G>d_B$, and so given $\nu^{coop}$, players are indifferent between cooperating or not when facing a good game, and cooperation in the bad games is still strictly incentivized (i.e., $d_B$ is small enough), then players are now strictly incentivized to cooperate both after good and bad games. 

In the example presented here, players are, in fact, strictly incentivized to cooperate after a bad game since $d_B=\frac{1}{2}<\frac{\frac{3}{5}}{1-\frac{3}{5}}\frac{1}{2}\cdot 1=\frac{\delta}{1-\delta}a_Gc_G$. That is, even being assigned to only good tasks (or any mixed assignment of tasks) in the future would suffice to ensure cooperation in the current bad game. 

The designer can now use this slack in incentives to reduce the average number of bad tasks in this assignment, and, with it, the overall share of bad games with cooperation. Instead of following a good task with two bad tasks,  the designer follows a good task with one bad task for sure, and then a probability of a bad task below one (with the remaining probability returning to the good task early).  This uses fewer bad tasks, and the remaining bad tasks can be assigned to some constantly reshuffled team to avoid cooperation in the associated bad games and to maintain task coverage.  Overall, this is a strict improvement for the designer.
\end{continuance}

An application of a reactive assignment like that described in the example above is immediately assigning a military unit to noncombat status after it has faced combat and, after some  time, reassigning those soldiers back to the front where combat becomes more likely again. 

The analysis here is a first example of dynamically adaptive mechanism design, in which a designer changes the mechanism agents play over time in response to past events. We now formalize the benefits of this additional assignment flexibility and characterize when the designer gains from dynamically and reactively assigning teams to tasks, as in the running example. We then show that in a designer-optimal reactive team structure, teams can be grouped into two types---those that always (on-path) cooperate and those that do not---and characterize each type's assignments.

To formalize the ideas, we need to be explicit about what the designer can observe and condition their assignments on. To start, consider a situation in which the designer can observe whether a good game has arrived (but not which actions were played). For example, the designer can observe whether a team in the army engaged with the enemy, but not how they cooperated internally. It may be natural to also assume that the designer does not observe whether a bad game---e.g., an opportunity to take a bribe---arrived.  Observability of bad game arrivals turns out to be inconsequential in our results. After first treating the case where the arrival of a good game is observed, we also consider the case where the designer cannot observe whether a good game actually arrived in a given period.

\subsection{Definitions and a Preliminary Result}

A \emph{reactive assignment} is a function $m:(\mathcal{T}\times \{G,B,\emptyset\})^K\rightarrow \Delta(\mathcal{T})$, where $K$ is a positive integer. That is, instead of simply being a point in $\Delta(\mathcal{T})$, the function $m$ can condition on which previous tasks were assigned and which games, if any, were played up through the previous $K$ periods, and then (possibly randomly) assign a task. Let $\mathcal{M}$ denote the set of possible reactive assignments.

A reactive assignment results in a distribution over tasks that, when aggregated across a continuum of teams, appropriately staggered, leads to a steady-state frequency of tasks. Specifically, the function $m$ can also be viewed as a map from $(\mathcal{T}\times \{G,B,\emptyset\})^K$ into $\Delta\left((\mathcal{T}\times \{G,B,\emptyset\})^K\right)$, as it drops one entry and adds a new one each period.  For $s,s' \in \mathcal{S} \equiv (\mathcal{T}\times \{G,B,\emptyset\})^K$, let $m(s|s')$ be the probability of transitioning from \emph{state} $s'$ to $s$. Then $m$ defines a Markov chain on $\mathcal{S}$. A steady-state distribution induced by $m$ is denoted by $\overline{m}$ and satisfies $\overline{m}(s) = \sum_{s' \in \mathcal{S}}  m(s|s')\overline{m}(s')$. Note that since the Markov chain may be periodic or reducible---and thus admit multiple steady states---it is important to specify the relevant steady state.

To illustrate these definitions, consider again the assignment in Example~\ref{example} in which the good task is followed by one bad task for sure and then a bad task with probability $x$ (and with the remaining probability, $1-x$, a return to the good task). In steady state, the process induced by $m$ visits only a small number of aggregate states (equivalence classes of histories that are relevant for continuation play under this assignment). In this example, we can summarize the relevant aggregate states for the designer to track by (i) the current task, (ii) whether a game arrives in the current period, and (iii) whether the team is in the first or second bad-task position of the cycle. This yields six aggregate states: $(t_G,G)$, $(t_G,\emptyset)$, $(t_B,B)_1$, $(t_B,\emptyset)_1$, $(t_B,B)_2$, and $(t_B,\emptyset)_2$. We thus consider the induced Markov chain on these aggregates, since all histories within a given aggregate generate the same continuation distribution under this assignment. The associated transition matrix is
{\small \begin{equation}\label{eq:matrix} 
\bordermatrix{
   & (t_G,G) & (t_G,\emptyset) & (t_B,B)_1 & (t_B,\emptyset)_1 
     & (t_B,B)_2 & (t_B,\emptyset)_2 \cr
(t_G,G)        & 0 & 0 & a_B & 1-a_B & 0 & 0 \cr
(t_G,\emptyset)& 0 & 0 & a_B & 1-a_B & 0 & 0 \cr
(t_B,B)_1      & (1-x)a_G & (1-x)(1-a_G) & 0 & 0 & xa_B & x(1-a_B) \cr
(t_B,\emptyset)_1 
               & (1-x)a_G & (1-x)(1-a_G) & 0 & 0 & xa_B & x(1-a_B) \cr
(t_B,B)_2      & a_G & 1- a_G & 0 & 0 & 0 & 0 \cr
(t_B,\emptyset)_2 
               & a_G & 1- a_G & 0 & 0 & 0 & 0 \cr
}.
\end{equation}

For example, following a good task, teams move deterministically to the first bad task. Conditional on this transition, a bad game arrives with probability $a_B$, in which case the state becomes $(t_B,B)_1$; if no game arrives, the state is $(t_B,\emptyset)_1$. Then, teams transition to another bad task with probability $x$, and conditional on this transition, a bad game arrives with probability $a_B$, in which case the state becomes $(t_B,B)_2$, and with complementary probability, the state becomes $(t_B,\emptyset)_2$. With probability $1-x$, teams transition back to the good task, and similarly either to state $(t_G,G)$ or $(t_G,\emptyset)$, etc. The resulting steady-state distribution over the six states is 
\begin{equation}\label{eq:ss}
\overline{m}
= \frac{1}{2+x}\Big(
a_G,\; 1-a_G,\; a_B,\; 1-a_B,\; x a_B,\; x(1-a_B)
\Big).
\end{equation}

An organization's \emph{reactive team structure} is $\lambda \in \Delta(\mathcal{M})$, together with reshuffling rate $r_{m}$ and a steady-state distribution $\overline{m}$ for each $m \in \supp(\lambda)$. 
The history of a team and players' strategies, specified for each $m$, is as in the previous section. Furthermore, we continue to model reshuffling as a reduction in the discount factor, as, given the continuum of agents and a finite set of assignments and states, players can be matched to a new team in the same reactive assignment and state. 

Lastly, the dynamic team structure must respect a task-assignment constraint---the equivalent of \eqref{eq:balance}: for each task, the steady-state mass of teams assigned to states corresponding to that task, summed across reactive assignments, must satisfy an exogenous constraint. Formally, let $\mathcal{S}^t(m,\overline{m})\subseteq \mathcal{S}$ denote the set of states $s=[(t_{\tau},g_{\tau})]_{\tau=0}^K\in \mathcal{S}$ such that $t_K=t$ for $t\in \mathcal{T}$, i.e., those where the current task is task $t$. The overall constraint on a team structure is:
\begin{equation}\label{eq:reactiveconstraint}
    \sum_{m\in\supp(\lambda)} \lambda(m) \sum_{s \in \mathcal{S}^t(m,\overline{m})}  \overline{m}(s) = q_t \quad \text{for all $t\in \mathcal{T}$},
\end{equation}
for some $q_t$ with $\sum_{t\in \mathcal{T}}q_t=1$. 

In the assignment from Example~\ref{example}, using the steady-state distribution over states given in \eqref{eq:ss}, the share on task $t_G$ is $\frac{a_G}{2+x}+\frac{1-a_G}{x+2}=\frac{1}{x+2}$ (and consequently the share on task $t_B$ is $\frac{1+x}{2+x}$). Depending on the overall distribution of teams across tasks that must be satisfied and the exact value of $x$, a team structure $\lambda$ may have to put weight on other reactive assignments in order to satisfy \eqref{eq:reactiveconstraint}.

This extension of the model, which allows the designer a richer set of assignments, again nests all previous formulations. For instance, a random, but static, task assignment as in Section~\ref{sec:task_assignment} is captured by reactive assignment $m$ being a constant function.

The next ``technical'' lemma establishes that Properties~\ref{property:symmetry}--\ref{property:star} continue to imply a simple on-path equilibrium structure in which players only fully cooperate or fully not cooperate. This simplifies the analysis, as we do not need to consider more intricate potential structures in which some players cooperate, and others do not.  Importantly, this result implies that there is a well-defined social value given any reactive team structure.

Given a reactive assignment $m$ with reshuffling rate $r_m$ and steady-state distribution $\overline{m}$, we say that an equilibrium strategy profile has \emph{cooperation in state $s=[(t_{\tau},g_{\tau})]_{\tau=0}^K\in \mathcal{S}$} if, on the equilibrium path, all players cooperate in $g_K$, the current game; analogously, we say there is \emph{no cooperation in state $s$} if all players do not cooperate. 

\begin{lemma}\label{claim:pure}
    Consider a reactive assignment $m$ with reshuffling rate $r_m$ and steady-state distribution $\overline{m}$. 
    In any player-optimal subgame-perfect equilibrium and state $s\in \mathcal{S}$, either there is cooperation in state $s$, no cooperation in state $s$, or no game occurs in $s$ (i.e., $g_K=\emptyset$).
\end{lemma}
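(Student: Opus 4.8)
The plan is to reduce the reactive setting to the same logic that underlies Lemma~\ref{lem:equilibrium_characterization}, exploiting the fact that players' actions never influence the evolution of tasks and games. First I would record the structural observation that makes this work: the reactive assignment $m$ conditions only on past tasks and realized games, never on actions, so the transition kernel $m(s\mid s')$ on $\mathcal{S}$ is exogenous to play. Consequently, fixing $m$, $r_m$, and the steady state $\overline{m}$, the environment is a stochastic game in which the state process is an autonomous Markov chain and each player simply chooses, in every state $s$ with $g_K\neq\emptyset$, an action in a stage game satisfying Properties~\ref{property:symmetry}--\ref{property:star}, with reshuffling folded into the effective discount factor $(1-r_m)\delta$.

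Second, I would pin down the harshest credible punishment. By Property~\ref{property:dom}, playing $N$ is strictly dominant in every stage game, so ``all players play $N$ in every state'' is a subgame-perfect equilibrium (repetition of a stage Nash), and by Property~\ref{property:minmax} it delivers each player's minmax value. This is therefore the worst equilibrium for every player, so---exactly as in the proof of Lemma~\ref{lem:equilibrium_characterization}---the entire player-optimal frontier can be supported by grim-trigger strategies that revert permanently to all-$N$ upon any deviation. Characterizing player-optimal play thus amounts to selecting, state by state, an on-path action profile that maximizes the discounted sum of stage payoffs subject to the one-shot incentive constraint at each state, with the continuation value entering as the reward for not triggering the punishment.

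Third, and this is the heart of the argument, I would show that at any on-path state $s$ with $g_K=g\neq\emptyset$ the selected profile must be $C^n$ or $N^n$. Suppose instead it is some other (possibly asymmetric or mixed) profile $\alpha$. By Property~\ref{property:max}, $C^n$ strictly maximizes the aggregate stage payoff, and by Property~\ref{property:last}, $N^n$ strictly maximizes it among all profiles other than $C^n$; moreover, by Property~\ref{property:star} the temptation to defect is smallest precisely at $C^n$. Hence full cooperation is simultaneously the most valuable and the easiest profile to sustain: if $\alpha$ involves any cooperation and is incentive-compatible, replacing it with $C^n$ raises the aggregate and remains incentive-compatible (the deviation gain only shrinks, and the continuation reward only grows); if $\alpha$ involves no cooperation it already equals $N^n$. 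Because actions leave the transition kernel untouched, this local substitution changes only the stage payoff at $s$ and the continuation values of states leading into $s$---and raising the value at $s$ relaxes, rather than tightens, every incentive constraint upstream. The modified profile is therefore still an equilibrium and yields a weakly higher payoff sum (strictly higher whenever $\overline{m}(s)>0$), contradicting player-optimality. This forces pure play in every on-path state and, with the ``no game'' case $g_K=\emptyset$ handled trivially, establishes the claim.

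The step I expect to be the main obstacle is making the third paragraph fully rigorous for general asymmetric and mixed $\alpha$: one must argue that the player-optimal (sum-maximizing) value can be attained by symmetric profiles, track the per-player incentive constraints rather than merely the aggregate when performing the $C^n$/$N^n$ substitution, and verify carefully---via the exogeneity of transitions---that altering play at a single state neither changes which states are reached nor perturbs the weights $\overline{m}$, so that the comparison of payoff sums is valid.
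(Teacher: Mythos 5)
Your overall strategy is the one the paper uses: exploit the exogeneity of the state process, support everything with grim-trigger reversion to all-$N$, and invoke Properties~\ref{property:max}, \ref{property:last}, and \ref{property:star} to argue that any on-path profile other than $C^n$ or $N^n$ can be replaced by $C^n$ at a weak gain in aggregate payoff without violating incentives. However, the obstacle you flag at the end is exactly where your sketch, as written, breaks: the one-state-at-a-time substitution does not obviously preserve \emph{per-player} incentive constraints. If the original profile $\alpha$ at state $s$ has some player $i$ defecting while others cooperate, then replacing $\alpha$ by $C^n$ lowers player $i$'s stage payoff at $s$ (by Property~\ref{property:dom}, $N$ against cooperators beats $C$ against cooperators), and hence can lower $i$'s continuation value at upstream states, tightening rather than relaxing $i$'s constraints there; ``the continuation reward only grows'' is true in aggregate but not player by player. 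The paper closes this by doing the replacement globally and symmetrically: the new profile is the grim trigger ``cooperate iff the current state lies in the set $\mathcal{A}$ of states at which some on-path history had all players cooperating with positive probability,'' so every player receives the same continuation value. Since the aggregate continuation weakly rises and is now split equally, there exists a player $i$ whose continuation after cooperating weakly rises, while $i$'s deviation gain weakly falls (Property~\ref{property:star}) and $i$'s punishment continuation weakly falls to the minmax (Properties~\ref{property:max} and~\ref{property:minmax}); that player's original incentive constraint therefore implies the new one, and by symmetry of the new profile all players' constraints coincide. The same global definition is also needed to handle the fact that the original equilibrium may cooperate at a given state after some histories but not after others---something your ``state-by-state selection'' framing quietly assumes away rather than proves. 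With these two repairs your argument coincides with the paper's proof.
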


Lemma~\ref{claim:pure} implies a well-defined social value of a reactive team structure. Let $\mathcal{G}(m,r_m,\overline{m})\subseteq \mathcal{S}$ (respectively, $\mathcal{B}(m,r_m,\overline{m})$) be the set of states $s$ such that there is cooperation in $s$ and the current game in $s$ is the good game (respectively, bad game). The social value is then given by 
\begin{equation}\label{eq:obj}
    \sum_{m \in \supp(\lambda)}\lambda(m)\left(\sum_{s\in \mathcal{G}(m,r_m)}\overline{m}(s)V_G+\sum_{s\in \mathcal{B}(m,r_m)}\overline{m}(s)V_B\right).
\end{equation}

To illustrate, the social value coming from our running example of a reactive assignment with steady state given in \eqref{eq:ss} is proportional to $\frac{a_G}{2+x}V_G+\left(\frac{a_B}{2+x}+\frac{xa_B}{2+x}\right)V_B$, assuming there is cooperation throughout.

Next, we consider which reactive team structures maximize~\eqref{eq:obj} subject to the constraint on the marginal task distribution, \eqref{eq:reactiveconstraint}.

\subsection{Designer-Optimal Assignments}

Throughout what follows, we focus on the most relevant case in which the incentive to deviate in bad games, $d_B$, is sufficiently low so that the incentives provided to cooperate in good tasks are the binding constraint.\footnote{When, instead, $d_G$ is sufficiently low, full specialization ensures optimal cooperation (Observation~\ref{obvs:full_specialization}).} 

In this section, we first ask when the designer benefits from the added flexibility of dynamically and reactively assigning teams to tasks. We then show that optimal reactive team structures take a particular form. We characterize these structures both when the designer can condition task assignments on the occurrence of games and when they cannot. In an optimal team structure, teams that cooperate are assigned to the good task until a good game arrives, at which point they are moved to the bad task for some amount of time before returning to the good task. Teams that do not cooperate pick up the slack to satisfy the exogenous task constraint.  Finally, we examine how our results change as the length of a time period shrinks to zero. 

When does dynamically and reactively assigning teams to tasks increase the optimal social value? If full specialization leads to optimal cooperation (Observation~\ref{obvs:full_specialization}), then there is no room to improve the social value further. At the other extreme, if there is no share of bad tasks that ensures cooperation, then cooperation on good games is never possible, even with a dynamic assignment. Thus, we focus on situations in which partial specialization leads to optimal outcomes and cooperation on good games can be incentivized, as was illustrated in Figure~\ref{fig:required_bad}. In our running example, a further improvement to the static assignment $\nu^{coop}$ was possible as $d_G<d_B$.  The next proposition shows that the general condition is unequal deviation payoffs, $d_G\neq d_B$. 

\begin{proposition}\label{prop:when}
    Suppose that $d_G> \frac{\delta}{1-\delta}a_G c_G$ and $\max_g \{d_g\} < \frac{\delta}{1-\delta} a_B c_B$, and that there exists an optimal (static) team structure $\lambda$ that has positive mass only on $\nu^{coop}$ and $\nu^g$ for some $g\in\{G,B\}$ where:
    \begin{itemize}
        \item $\nu^{coop}$ is as defined in Equation \ref{hybrid} and $r_{\nu^{coop}}=0$
        \item and $\nu^{g}(t_{g})=1$ and $r_{g}$ is sufficiently high to preclude cooperation if $g=B$.
    \end{itemize} Then, the social value of the optimal reactive team structures is strictly higher than that of optimal nonreactive team structures if $d_G\neq d_B$.
\end{proposition}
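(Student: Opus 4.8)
The plan is to fix the optimal static team structure guaranteed by the hypothesis and to construct, by hand, a reactive team structure with strictly larger social value; since the optimal reactive value dominates any particular reactive value, this establishes the strict inequality. Throughout I exploit two consequences of the maintained assumptions. First, $d_G>\frac{\delta}{1-\delta}a_Gc_G$ and $\max_g\{d_g\}<\frac{\delta}{1-\delta}a_Bc_B$ give $a_Gc_G<\frac{1-\delta}{\delta}d_G\le\frac{1-\delta}{\delta}\max_g\{d_g\}<a_Bc_B$, so a period spent on the bad task contributes strictly more to the continuation value of the cooperative relationship than a period on the good task. Second, $\nu^{coop}$ is pinned down by the binding incentive constraint \eqref{hybrid}, so its state-independent continuation value equals exactly $\max_g\{d_g\}$; hence the non-binding game — the game $g$ with $d_g=\min_g\{d_g\}$, which is strictly smaller precisely because $d_G\neq d_B$ — has strictly slack incentives, with slack $|d_G-d_B|>0$. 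This is the crux: in the static scheme the uniform continuation value is ``overpaid'' on the slack game, and reactivity lets the designer redeploy that slack.

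Write $g^\ast=\argmax_g d_g$ for the binding game, which arrives only on task $t_{g^\ast}$. The first step is a pure rearrangement that preserves long-run task frequencies. By Lemma~\ref{claim:pure} every state is a cooperation state, a non-cooperation state, or a no-game state, so the social value \eqref{eq:obj} depends only on the steady-state frequencies of cooperated good and bad games. I construct a finite-memory reactive assignment $m$ for the cooperating teams whose steady-state task marginal coincides with $\nu^{coop}$, but which front-loads bad tasks into the periods immediately following an arrival of $g^\ast$ on $t_{g^\ast}$, pushing the compensating good tasks to later periods (for $g^\ast=G$ this is exactly the pattern of Example~\ref{ex:reactive}). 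Because bad periods are worth strictly more than good periods in the continuation ($a_Bc_B>a_Gc_G$) and future periods are discounted ($\delta\in(0,1)$), moving the high-value periods earlier strictly raises the continuation value evaluated at the binding states, leaving the long-run frequency — and hence the social value — unchanged. Thus $g^\ast$'s constraint becomes strictly slack under $m$; I choose the rearrangement gentle enough (e.g.\ front-loading by a single period) that the non-binding game's constraint, slack by $|d_G-d_B|$ to begin with, continues to hold, so total cooperation is preserved.

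The second step converts the new slack into strictly higher value. Since the binding constraint is now strictly slack, the cooperating teams can sustain total cooperation while carrying a strictly smaller long-run share of bad tasks: I lower that share by a small $\epsilon>0$ (raising the probability of returning early to the good task) and feed the change through the balance constraint \eqref{eq:reactiveconstraint}. For $\epsilon$ small both incentive constraints still hold — the binding one by the strict slack just created, the other by continuity from its pre-existing slack — so total cooperation survives on the cooperating teams. The resulting change in \eqref{eq:obj} is strict and of the right sign in either subcase of Figure~\ref{fig:required_bad}: when the non-cooperating specialized teams cover bad tasks, the vacated bad tasks are absorbed by those (never-cooperating) teams, so the mass of cooperated bad games falls strictly while cooperated good games are unchanged; when instead they cover good tasks, the smaller bad-task burden lets a strictly larger mass of teams join the cooperating assignment, converting good games previously wasted on non-cooperating specialists into cooperated ones. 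In both cases \eqref{eq:obj} strictly increases, so the constructed reactive structure beats the static optimum.

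The main obstacle is the first step: exhibiting a bounded-memory Markov chain that simultaneously (a) reproduces the $\nu^{coop}$ task frequencies in steady state, (b) strictly relaxes the binding game's constraint, and (c) does not push the non-binding game's continuation below its threshold, since the two games' continuation values are coupled through the single shared task sequence. I would establish (b) by writing the continuation value at a binding state as a discounted sum over the realized task path and comparing it term-by-term to the static value through a standard exchange argument built on $a_Bc_B>a_Gc_G$ and $\delta\in(0,1)$, and I would secure (c) by keeping the rearrangement local (small $K$) so that every continuation value moves by a controlled amount dominated by the slack $|d_G-d_B|$. Finally, this makes the role of the hypothesis transparent: when $d_G=d_B$ both constraints bind at once under the static scheme, there is no slack game to draw on, and any rearrangement relaxing one game's constraint tightens the other — so the strict improvement is available exactly when $d_G\neq d_B$.
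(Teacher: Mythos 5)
Your proposal is correct and follows essentially the same route as the paper: perturb the task-transition process so that, holding the steady-state task marginal at $\nu^{coop}$ fixed, the continuation after the binding game is tilted toward bad tasks (exploiting $a_Bc_B>a_Gc_G$), use the slack $|d_G-d_B|>0$ on the non-binding game to keep both constraints satisfied, and then convert the resulting strict slack into a strictly smaller bad-task share as in Lemma~\ref{lem:design_opt}. The paper resolves the ``main obstacle'' you flag with an explicit two-parameter Markov family $w_{gg'}$ on current tasks together with the normalization $w_{GB}=w_{BG}\tfrac{1-\overline{m}^{coop}(t_G)}{\overline{m}^{coop}(t_G)}$ that pins the steady state, which is exactly the bounded-memory construction your sketch calls for.
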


Given the premise of the proposition and $d_G\neq d_B$, players have a strict incentive to cooperate on one of the games, say game $g$, and are indifferent between cooperating or not on the other, say game $g'$. The designer can then make future bad games, those that provide relatively stronger incentives for current cooperation, more likely following game $g'$ and less likely following game $g$, keeping the unconditional distribution of games (and tasks) unchanged. As a result, players are now strictly incentivized to cooperate on both games, allowing the designer to reduce the share of bad tasks in that assignment, as in the running example, and thereby improve the social value.\footnote{The converse of the proposition statement also holds, but only for the case when the designer does not observe the arrival of games.}

We now turn to characterizing designer-optimal reactive team structures. 

Our first result makes it easier to describe the optimal structure as it establishes that there are essentially two types of teams: those that always cooperate and those that never cooperate. 

\begin{lemma}\label{lem:design_opt}
Suppose that $d_G> \frac{\delta}{1-\delta}a_G c_G$ and $\max_g \{d_g\} < \frac{\delta}{1-\delta} a_B c_B$. Consider a designer-optimal reactive team structure $\lambda$, $m\in \supp (\lambda)$ with a reshuffling rate $r_m$ and steady state $\overline{m}$. For any player-optimal equilibrium, all players in $m$ either cooperate at all on-path histories $h$ and games $g$ (i.e., they are \emph{fully cooperative}), in which case $r_m=0$, or all players do not cooperate at all on-path histories $h$ and games~$g$ (i.e., they are \emph{fully not cooperative}). 
\end{lemma}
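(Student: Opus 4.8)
The lemma asserts that in a designer-optimal reactive team structure, each reactive assignment $m$ in the support partitions cleanly: either every on-path state yields cooperation (and then $r_m=0$), or every on-path state yields non-cooperation. I need to rule out the ``mixed'' case where some $m$ cooperates on some on-path states but not others.

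Let me think about why this should be true.

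**Setting up.** We're given the interesting-case assumptions: $d_G > \frac{\delta}{1-\delta}a_G c_G$ (can't sustain good cooperation alone) and $\max_g d_g < \frac{\delta}{1-\delta}a_B c_B$ (enough bad tasks sustain all cooperation). By Lemma 3 (claim:pure), every on-path state is either cooperation, no-cooperation, or no-game. So for any $m$, the states partition into those where players cooperate and those where they don't.

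**The core argument.** Consider an optimal reactive team structure and suppose, for contradiction, some $m$ in the support is "mixed"—cooperation in some on-path states, not in others. I want to show we can do strictly better (or at least weakly, and construct a strictly-better deviation), contradicting optimality.

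Key economic intuition: A state where a good game occurs and players cooperate contributes $+V_G$; a state where a bad game occurs and players cooperate contributes $+V_B < 0$ (social harm). The bad-game cooperation is only tolerated because it provides the continuation incentives needed to sustain the valuable good-game cooperation. So the designer wants to minimize the mass on cooperated-bad-game states subject to sustaining cooperated-good-game states.

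If $m$ is mixed, there exist good-game states where players cooperate and good-game states where they don't (or bad-game states, etc.). The plan is to show that the aggregate incentive constraints are *linear* in how much mass sits on cooperating-bad vs. cooperating-good states, and hence an extreme/bang-bang solution dominates. More concretely, I would:

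1. Use Lemma 3 to reduce to the on-path cooperation/no-cooperation classification of states.
2. Observe that the social objective is linear in the steady-state masses $\overline{m}(s)$ and that the incentive constraints (which states can sustain cooperation) depend on discounted continuation values, which are themselves linear functions of the pattern of future cooperation.
3. Argue a separation/replacement: take the mixed $m$ and split its teams into a "fully cooperative" group and a "fully non-cooperative" group such that (a) the aggregate task-assignment constraint (eq. reactiveconstraint) is preserved and (b) the cooperative group's incentive constraints hold because we can concentrate bad-game exposure on them while the non-cooperative group, being reshuffled ($r=1$), needs no incentives.

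**The deviation construction.** The cleanest approach: given a mixed $m$, I would construct a replacement using two new assignments. Route the teams that were cooperating onto a modified reactive assignment $m'$ that retains exactly the cooperating states—now with $r_{m'}=0$—by reassigning the "slack" (the good-game states that were *not* being cooperated on under $m$) to the non-cooperative pool. Because cooperation on a good game generates $+V_G$ and non-cooperation generates $0$, converting a not-cooperated good-game state into a cooperated one (feasible if we can redirect enough bad-task continuation mass to it) weakly raises value; and since $V_B<0$, we never want a cooperated bad-game state that isn't strictly necessary for incentives. The linearity of the incentive inequality—cooperation on a current good game requires $d_G \le \frac{\delta}{1-\delta}(\text{expected discounted }c\text{ from future cooperated states})$—means we can always re-weight to make the incentive bind exactly, pushing toward the bang-bang structure.

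**The main obstacle.** The hard part is handling the *continuation-value feedback*: whether a given state sustains cooperation depends on the continuation play, which changes when I modify the assignment, so the incentive constraints are not fixed but move with the deviation. I expect the key technical lemma to be showing that one can always "purify" a mixed $m$ without violating any incentive constraint—essentially proving that if cooperation is sustainable on a mixed pattern with value $V$, it is sustainable on a fully-cooperative assignment (absorbing the good slack) plus a reshuffled non-cooperative assignment (absorbing residual bad tasks) with value at least $V$, strictly more if any slack existed. I would formalize this by (i) writing the cooperation incentive at each cooperated state as a discounted-sum inequality over the Markov chain $m$ induced by $\overline{m}$, (ii) noting the worst continuation (grim-trigger to permanent non-cooperation) gives the binding constraint, and (iii) verifying that moving a non-cooperated good-game state into the cooperative assignment only *adds* positive continuation $c$-value, hence relaxes—never tightens—every other cooperative state's constraint. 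This monotonicity is what makes purification feasible. The task-balance constraint (eq. reactiveconstraint) is preserved by construction because the non-cooperative pool is reshuffled and can be assigned arbitrary residual task frequencies to absorb exactly the freed-up mass.

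Finally, I would record that once an assignment is fully cooperative, optimality forces $r_m=0$ (any reshuffling strictly lowers continuation incentives and shrinks the feasible cooperative region without benefit, since these teams generate positive net value on the good-game states they realize), completing the dichotomy.
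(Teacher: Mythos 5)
Your high-level strategy matches the paper's: given a ``mixed'' assignment $m$, split its teams into a fully cooperative pool and a fully non-cooperative (reshuffled) pool, use the reshuffled pool to absorb residual task frequencies so that the balance constraint \eqref{eq:reactiveconstraint} is preserved, and then exploit slack in the cooperative pool's incentive constraints to obtain a strict improvement, contradicting optimality. The strict-improvement step you sketch (``re-weight to make the incentive bind exactly'') is also in the spirit of the paper's explicit $\epsilon$-perturbation, which adds a small probability of remaining on the good task after a good game and rebalances with a permanently reshuffled bad-task-only assignment.

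However, there is a genuine gap at the central step: you never actually construct the cooperative sub-assignment or justify why its incentive constraints survive the split. The paper does this with ``skipping'' chains $m^C$ and $m^{NC}$: the transition probability from one cooperating state to another is the original one-step probability \emph{plus} the total probability of all excursions through non-cooperating states, so the new chain integrates out the dead time. The reason incentives are preserved---indeed become \emph{strictly} slack, which is what powers the whole contradiction---is that excising the non-cooperative excursions brings future cooperation benefits forward in time, reducing the discounting they suffer and strictly raising every retained state's continuation value. Your substitute argument---that ``moving a non-cooperated good-game state into the cooperative assignment only adds positive continuation $c$-value, hence relaxes every other cooperative state's constraint''---addresses a different operation (adding states rather than removing them) and ignores that the newly added state carries its \emph{own} incentive constraint, which in the regime $d_G>\frac{\delta}{1-\delta}a_Gc_G$ is not automatically satisfied: whether cooperation can be sustained there depends on its successors, not its predecessors. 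Relatedly, the claim that the steady-state task frequencies of the cooperative pool are automatically manageable requires the specific form of the skipping construction (the paper verifies $\overline{m}=\overline{m}^C+\overline{m}^{NC}$); asserting that the reshuffled pool ``can absorb arbitrary residuals'' handles the balance constraint but not the identification of where the strict slack comes from. Without the skipping construction and its time-compression argument, the proof of the dichotomy does not close.
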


For intuition, suppose a type of assignment in the support of $\lambda$ induces cooperation only sometimes. The designer could split such  an assignment into two assignments: one that transitions only across the states with cooperation in the initial assignment and another that is assigned to the remaining states. Since there would be no periods without cooperation, teams with the first assignment are now strictly incentivized to cooperate, allowing the designer to reduce the share of bad games with cooperation, and Lemma~\ref{lem:design_opt} follows. Thus, organizations optimally minimize the share of undesirable cooperation needed to sustain desirable cooperation by keeping cooperating teams together and eliminating states without cooperation, thereby allowing the incentives arising from future undesirable cooperation to have their maximum impact. 

\subsubsection{Good Games Are Observable}\label{sec:obs}

Now, we fully characterize the optimal reactive team structures, beginning with the case in which the designer observes whether a good game occurs.

To understand why observability of the good game's arrival matters, let us revisit the dynamic assignment in which a good task is followed by a bad task, followed by a bad task with some probability before repeating. The designer can further improve by conditioning on whether a good game actually arrives under the good task. If none arrive, then there is no reason to transition to the bad task, as no incentives are needed. In contrast, if one arrives, then the bad task assignment is needed. Thus,  the designer can improve by waiting to transition to the bad task until after a good game actually arrives.   

For this analysis, we specialize to the interesting case in which the incentive for cooperation in the good game is the binding constraint. In particular, we assume that the temptation to deviate on a current bad game is sufficiently low, specifically that facing a stream of future good games alone would suffice to ensure cooperation on current bad games, i.e., $d_B<\frac{\delta}{1-\delta}a_Gc_G$. When this condition holds, a team would cooperate in equilibrium in the bad game, regardless of which games it sees in the future (presuming cooperation). Given this assumption, the binding incentive constraint is on good games, and the ensuing proposition provides a full characterization of designer-optimal team structures. 

By Lemma~\ref{lem:design_opt}, there are only \emph{fully cooperative} or \emph{fully non-cooperative} teams. Proposition~\ref{prop:main} below additionally says that if a fully cooperative team exists, it is assigned to the good task until a good game arrives; then, it is assigned to the bad task for a number of periods before returning to the good task.\footnote{While it may take an arbitrary number of periods to transition from the good task to the bad task, a reactive assignment conditioning on a finite number of past tasks and games suffices, as only whether a good game arrived in the last period matters.} The duration of the assignment to the bad tasks is minimized, subject to providing enough incentives to cooperate on the good game. If the resulting steady-state distribution of good and bad games with cooperation leads to a net positive social value, the mass of teams in that reactive assignment is maximized, and the teams that never cooperate are assigned to the remaining task to maintain the overall condition on task coverage, as before.

Because reactive assignments allow arbitrary (random) conditioning on whatever the designer observes, the policy class can be extremely rich. Our characterization pins down the global optimum in this high-dimensional space. It shows that optimality is attained by a simple policy with only two building blocks (fully cooperative vs.\ fully non-cooperative teams) and a minimal sequence of bad tasks after good arrivals.

\begin{proposition}\label{prop:main}
Suppose that $d_G> \frac{\delta}{1-\delta}a_G c_G$, $\max_g \{d_g\} < \frac{\delta}{1-\delta} a_B c_B$, $d_B<\frac{\delta}{1-\delta}a_Gc_G$, and that the designer observes when a good game occurs. Consider a designer-optimal reactive team structure $\lambda$. For each $m \in \supp(\lambda)$ in which players are fully cooperative, the following holds:
    \begin{enumerate}
        \item When playing a good game, players are indifferent between cooperating and deviating and not cooperating.
        \item After playing a good game, teams are assigned the bad task for $N_B$ periods; in the period after that, teams must be assigned the good task with positive probability and then probability 1 in the next period if they are not assigned to it in that period.
        \item If players are assigned to the good task, they continue to be assigned to the good task until they play the good game.
    \end{enumerate}

    Furthermore, unless there is an optimal reactive team structure $\lambda$ with fully non-cooperative teams for all reactive assignments $m\in\supp(\lambda)$, all reactive assignments $m'\in \supp(\lambda)$ in which players are fully non-cooperative assign teams to the same task.
\end{proposition}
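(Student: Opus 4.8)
The plan is to characterize the designer-optimal reactive team structure by exploiting the binding-incentive assumption $d_B<\frac{\delta}{1-\delta}a_Gc_G$, which (combined with $d_G>\frac{\delta}{1-\delta}a_Gc_G$) forces $d_G>d_B$, so that good games are the constraint while bad games cooperate for free given any cooperative continuation. By Lemma~\ref{lem:design_opt} I can restrict attention to fully cooperative and fully non-cooperative teams, and reshuffling rate $r_m=0$ for the cooperative ones. The social objective~\eqref{eq:obj} rewards good-game cooperation at rate $V_G>0$ and penalizes bad-game cooperation at rate $-V_B>0$; hence for a \emph{fixed} fraction of cooperating teams the designer wants to \emph{minimize} the steady-state mass of states in which a bad game is actually played on the cooperating teams, subject to the on-path incentive constraint for the good game holding at every cooperative state. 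This reframes the whole problem as minimizing bad-game exposure per unit of good-game cooperation, which is exactly the intuition of Example~\ref{ex:reactive}.

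I would establish the three enumerated claims in order. First, \emph{indifference in the good game} (claim~1): if at some good-game state players were strictly incentivized to cooperate, the designer could perturb the reactive assignment $m$ to reduce the probability of a subsequent bad task (shifting that probability to the good task) while preserving the marginal task constraint~\eqref{eq:reactiveconstraint}; this weakly relaxes the good-game constraint elsewhere but strictly lowers bad-game cooperation mass, contradicting optimality. So the good-game incentive constraint binds with equality wherever it is active. Second, \emph{the post-good-game structure} (claim~2): because the binding resource is future cooperative bad games, and a future bad task provides stronger per-period incentives than a future good task (this is the inequality $a_Bc_B>a_Gc_G$ derived as in Example~\ref{ex:reactive} from the two maintained hypotheses), the cheapest way to deliver the required discounted incentive $d_G$ is to \emph{front-load} bad tasks immediately after the good game. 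I would argue by an exchange/rearrangement argument: any optimal assignment that places a bad task later rather than earlier can be improved by moving it earlier, because earlier bad tasks are discounted less and so fewer of them are needed to meet the binding constraint. This pins down a block of $N_B$ consecutive bad tasks right after the good game, with the fractional remainder handled by a positive-probability transition back to the good task in period $N_B+1$ and probability-one return thereafter. Third, \emph{persistence on the good task until a good game arrives} (claim~3): here observability is essential---if no good game arrived, no incentive was "spent," so there is no reason to incur bad-game exposure; formally, conditioning the transition on the realized good game (rather than merely on the good task) strictly reduces bad-game mass, again by an improvement argument, so an optimal structure never transitions to bad tasks absent a good-game arrival.

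The final "furthermore" clause---that all fully non-cooperative assignments use the \emph{same} task---follows from an extremality/linear-programming argument. Given the cooperative teams' steady-state task demands are fixed by claims 1--3, the non-cooperative teams must cover the residual task mass dictated by~\eqref{eq:reactiveconstraint}; since non-cooperative teams contribute zero to the objective~\eqref{eq:obj} regardless of which task they perform, the designer is free to assign them to whichever task is in residual excess, and the balance constraint forces this residual to be concentrated on a single task (the good task when cooperating teams over-consume bad tasks, and vice versa). The only escape is the stated exception---an entirely non-cooperative optimum---when the net social value $V_G$-weighted cooperation minus $V_B$-weighted cooperation of the cooperative block is nonpositive.

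I expect the \textbf{main obstacle} to be making the front-loading/exchange argument for claim~2 fully rigorous while respecting the steady-state and marginal-constraint structure simultaneously. The difficulty is that rearranging the timing of bad tasks within a reactive assignment $m$ changes its induced Markov chain and hence its steady-state distribution $\overline{m}$ and its marginal task frequencies; one must verify that the improving perturbation can be carried out while keeping~\eqref{eq:reactiveconstraint} satisfied (absorbing any slack into the non-cooperative teams) and while keeping the binding good-game constraint exactly met. Handling the fractional period cleanly (the "$N_B$ periods, then positive probability, then probability one" phrasing) requires care because the discounted incentive sum need not be an integer multiple of a single bad-task's contribution, so the extremal solution is a block of integer length plus a randomized boundary period; verifying that this randomized-boundary structure is genuinely optimal (rather than, say, spreading the fraction over several periods) is where the discounting-monotonicity argument must be deployed most carefully.
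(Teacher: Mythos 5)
Your overall architecture matches the paper's proof almost exactly: invoke Lemma~\ref{lem:design_opt} to reduce to fully cooperative teams with $r_m=0$, show indifference in the good game via a slack-incentive contradiction, front-load bad tasks after a good game by an exchange argument exploiting $a_Bc_B>a_Gc_G$, use observability to keep teams on the good task until a good game arrives, and concentrate the non-cooperative residual on a single task by a mass-shifting argument. You also correctly identify the fractional-period bookkeeping (the $N_B$-plus-randomized-boundary structure, with $N_B=\lfloor\sum_k p_k/a_B\rfloor$) as the place requiring the most care, and the paper's $m^*$ construction is exactly the rearrangement you describe.

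The one step that does not work as written is your argument for claim~1. You assert that reducing the probability of a subsequent bad task after a good-game state with strict incentives ``weakly relaxes the good-game constraint elsewhere.'' It does the opposite: removing future bad tasks lowers continuation values at every state whose on-path future passes through the perturbed state, so if some other good-game state is already at indifference, your perturbation breaks its incentive constraint and cooperation there unravels. The paper avoids this by first replacing the continuation after \emph{every} good-game state with the steady-state-weighted average of all such continuations (which preserves $\overline{m}$, preserves cooperation, and leaves the designer's value unchanged); after this homogenization all good-game states share a single scalar incentive condition, so either all are strict or all are indifferent, and only then is the slack-reduction argument applied. You would need this averaging step (or an equivalent device) to make your claim-1 argument rigorous; with it, the rest of your plan goes through along the paper's lines.
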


To illustrate the proposition, consider again Example~\ref{example}. The proposition implies that the reactive assignment in which players always cooperate, and that forms part of a designer-optimal reactive team structure, can be described by the following Markov chain
{\small $$
\bordermatrix{
   & (t_G,G) & (t_G,\emptyset) & (t_B,B)_1 & (t_B,\emptyset)_1 
     & (t_B,B)_2 & (t_B,\emptyset)_2 \cr
(t_G,G)        & 0 & 0 & a_B & 1-a_B & 0 & 0 \cr
(t_G,\emptyset)& a_G & 1-a_G & 0 & 0 & 0 & 0 \cr
(t_B,B)_1      & (1-x)a_G & (1-x)(1-a_G) & 0 & 0 & xa_B & x(1-a_B) \cr
(t_B,\emptyset)_1 
               & (1-x)a_G & (1-x)(1-a_G) & 0 & 0 & xa_B & x(1-a_B) \cr
(t_B,B)_2      & a_G & 1- a_G & 0 & 0 & 0 & 0 \cr
(t_B,\emptyset)_2 
               & a_G & 1- a_G & 0 & 0 & 0 & 0 \cr
}.
$$}

Specifically, consider a team currently assigned to the good task. If no game arrives, the state is $(t_G,\emptyset)$, and it is assigned again to the good task. Conditional on this transition, a good game arrives with probability $a_G$, in which case the state becomes $(t_G,G)$; otherwise, the state remains $(t_G,\emptyset)$. If the state is $(t_G,G)$, the team is then assigned to the bad task for at least one period. In particular, if a bad game arrives, the next state is $(t_B,B)_1$, and if no bad game arrives, the next state is $(t_B,\emptyset)_1$. As in the transition matrix \eqref{eq:matrix}, the team may then be assigned to another bad task with probability $x$, or transition back to the good task. For the parameter values in Example~\ref{example}, one can show that, indeed, for $N_B=1$ and $x=5/9$, players are indifferent between cooperating and deviating when facing a good game, as optimality requires.

Why is it not optimal to have a reactive assignment (in which players always cooperate) in which a good task with no arriving good game is followed by a bad task? Recall that bad tasks provide greater incentives to cooperate than good tasks, that is $a_Bc_B>a_Gc_G$. While bad tasks are necessary to provide sufficient incentives for cooperation, the designer uses them as sparingly as possible. Thus, an optimal assignment assigns teams to bad tasks when the temptation to deviate is greatest, and this is following a good game. To optimally leverage bad tasks to provide the most incentives, the bad tasks are ``frontloaded.'' Specifically, the delay with which bad games arrive after a good game and before the next good game must be as short as possible for optimality.

We emphasize that the optimal reactive team structure does not need to condition on whether a bad game occurred. This implies that the designer observing the occurrence of a bad game does not improve the social value attainable.  This follows because there is no need to incentivize cooperation in bad games,  and it is only bad games' \emph{expected} arrivals that matter in incentivizing cooperation in good games. However, the optimal team structure does condition on the occurrence of a good game---teams remain assigned to the good task until a good game arrives. 

\subsubsection{Good Games Are Unobservable}\label{sec:noobs}

In this section, we fully characterize the optimal reactive team structures for the case in which the designer does not observe whether a good game occurs when a team is assigned to the good task. 

Again, revisiting Example~\ref{example}, the improvement we noted at the beginning of Section~\ref{sec:obs} in which the designer waited for the arrival of a good game before transitioning is no longer possible. Instead, the designer simply transitions teams directly after the good task, as the incentives are needed if a good game did indeed arrive; that is, the teams transition according to the matrix \eqref{eq:matrix}.  As a result, this case does strictly worse than the case in which the arrival of the game is observable (and arrives with probability less than one), but still strictly improves over the case with a static assignment. 

Proposition~\ref{prop:main_g_unob} is the analog of Proposition~\ref{prop:main} for the case in which the designer does not observe good games.

\begin{proposition}\label{prop:main_g_unob}
Suppose that $d_G> \frac{\delta}{1-\delta}a_G c_G$, $\max_g \{d_g\} < \frac{\delta}{1-\delta} a_B c_B$, $d_B<\frac{\delta}{1-\delta}a_Gc_G$, and that the designer does not observe when a good game occurs. Consider a designer-optimal reactive team structure $\lambda$. For each $m \in \supp(\lambda)$ in which players are fully cooperative, the following holds:
\begin{enumerate}
    \item When playing a good game, players are indifferent between cooperating and deviating and not cooperating.
    \item After being assigned the good task, teams are assigned the bad task for $N_B$ periods; in the period after that, teams must be assigned the good task with positive probability (and then probability 1 in the next period if they are not assigned to it in that period.
\end{enumerate}

Furthermore, unless there is an optimal reactive team structure $\lambda$ with fully non-cooperative teams for all reactive assignments $m\in\supp(\lambda)$, all reactive assignments $m'\in \supp(\lambda)$ in which players are fully non-cooperative assign teams to the same task.
\end{proposition}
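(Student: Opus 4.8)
The plan is to mirror the proof of Proposition~\ref{prop:main}, isolating the single change forced by unobservability: since the designer cannot see whether a good game arrived while a team was on the good task, any reactive assignment $m$ used for a fully cooperative team must be measurable with respect to the realized task history alone (conditioning on bad-game arrivals remains irrelevant, exactly as noted after Proposition~\ref{prop:main}). I would first invoke Lemma~\ref{lem:design_opt} to restrict attention to assignments that are either fully cooperative (with $r_m=0$, so the effective discount factor is $\delta$) or fully non-cooperative, and Lemma~\ref{claim:pure} to guarantee each state is unambiguously one of these. Under the maintained hypothesis $d_B<\frac{\delta}{1-\delta}a_Gc_G$, a team that cooperates on every good game it encounters automatically satisfies the bad-game constraint (a stream of future good games alone suffices), so the only binding constraint on a fully cooperative $m$ is the good-game one. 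For each good-task state I would write this as $d_G\le F_m(s)$, where $F_m(s)=\sum_{\tau\ge 1}\delta^\tau\beta_\tau$ is the expected discounted stream of cooperation benefits from the next period on, with $\beta_\tau$ equal to $a_Bc_B$ in a bad-task period and $a_Gc_G$ in a good-task period.

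Next I would establish part~1 (indifference) by an improvement argument. If $F_m(s)>d_G$ at some on-path good-task state of a cooperative team, the designer can shorten the block of bad tasks that follows, lowering $F_m$, and re-assign the freed bad tasks to some constantly reshuffled non-cooperative team to preserve the coverage constraint~\eqref{eq:reactiveconstraint}; since $V_B<0$, this strictly raises the objective~\eqref{eq:obj}. Because the number of bad tasks is integer-valued, exact equality $F_m(s)=d_G$ is obtained by randomizing between $N_B$ and $N_B+1$ bad tasks, which produces the randomized (positive-probability) return in part~2 and yields the stated indifference.

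The core of part~2 is a frontloading (rearrangement) argument, which I expect to be the main obstacle. The incentive contributed to a good game by a bad-task period located $\tau$ steps later is $\delta^\tau a_Bc_B$, which is decreasing in $\tau$; and since $a_Bc_B>a_Gc_G$ (forced by the two standing inequalities $\frac{\delta}{1-\delta}a_Gc_G<d_G\le\max_g\{d_g\}<\frac{\delta}{1-\delta}a_Bc_B$), bad tasks are simultaneously the scarce incentive-providing resource and the socially costly one. By an exchange argument---moving any bad-task period to an earlier position within a cycle weakly relaxes the incentive constraint at every good-game state it serves while leaving the long-run task frequencies unchanged---the minimal number of bad tasks meeting the binding constraint from part~1 is achieved by placing the entire bad-task block immediately after the good task and by using a single good-task period per cycle (interspersing good tasks would only create additional, tighter good-game constraints). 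This yields the cycle $t_G,t_B^{N_B}$ together with the randomized extra bad task. The only substantive departure from Proposition~\ref{prop:main} is that, lacking observability, the team cannot linger on the good task until a good game actually materializes; it must start the bad-task block after merely \emph{being assigned} the good task, which is exactly why part~2 replaces the ``after playing a good game'' phrasing of Proposition~\ref{prop:main} and why no conditioning on the good game's arrival survives.

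Finally, for the non-cooperative teams I would use an extremal argument. Holding fixed the per-team social value of a cooperative assignment, the objective is increasing in the total mass of cooperative teams whenever such an assignment carries strictly positive social value (otherwise the all-non-cooperative optimum applies, the stated exception). Maximizing the cooperative mass subject to~\eqref{eq:reactiveconstraint} leaves a one-sided residual task requirement---the cooperative cycle either over- or under-supplies the good task relative to $q_G$---so the non-cooperative teams must all cover the single under-supplied task, delivering the final claim.
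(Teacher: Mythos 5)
Your proposal is correct and follows essentially the same route as the paper: reduce to fully cooperative/fully non-cooperative assignments via Lemma~\ref{lem:design_opt}, derive the indifference in part~1 from a slack-improvement argument, obtain the frontloaded $t_G, t_B^{N_B}$ block with a randomized extra bad task by an exchange argument using $a_Gc_G<a_Bc_B$, and settle the non-cooperative teams by an extremal mass-shifting argument. The only difference in execution is that the paper first replaces $m$ by an assignment whose continuation after the good task is the steady-state average over all good-task states (hence history-independent), which is what makes your per-state ``move the bad task earlier'' exchange well defined for general, non-cyclic reactive assignments.
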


The proposition confirms that the Markov Chain transition matrix \eqref{eq:matrix} characterizes the structure of an optimal reactive assignment when cooperation is optimal and good games are unobservable. Given the parameters in Example~\ref{example}, we obtain $N_B=1$ and $x=5/27$. In the corresponding steady state, fully cooperative teams spend $32/59$ of their time on bad tasks, compared to $7/16$ when the designer observes the arrival of good games. Thus, social value is strictly lower in the unobservable case (which still is an improvement relative to the static case in which the share spent on bad tasks is $\nu^{coop}(t_B)=2/3$).

Note that, as before, the reactive assignments do not condition on whether a bad game occurred. The designer observing the occurrence of a bad game still does not improve the social value. 

In both the observable and unobservable cases, optimal reactive team structures have a simple form: they rely on only two building blocks---fully cooperative and fully non-cooperative teams---and feature a finite sequence of bad-task assignments following good-task assignments. The key difference is that when good games are observable, the transition to the bad task can be conditioned on a good game's actual arrival, so it is not forced when it is not needed, whereas when the arrival is unobservable, the designer must transition regardless. Consequently, the observable case achieves the same incentive provision with fewer bad tasks in steady state and strictly higher social value.

\section{Concluding Remarks}

A systemic approach is important in understanding and designing organizations. Nominally independent tasks and situations become dependent if team members interact repeatedly. Our analysis suggests that organizations can overcome some of the undesirable spillover effects through intentional design.  For example, the \cite{mollen1994}, discussing corruption and policing, describes an empirical regularity showing the importance of a systemic view: ``[the code of silence] is strongest where corruption is most frequent. This is because the loyalty ethic is particularly powerful in crime-ridden precincts where officers most depend upon each other for their safety each day.'' The ``loyalty ethic'' corresponds to cooperation in bad situations, and the ``depend[ence] upon each other for their safety'' describes the benefits from cooperation in good situations. The report then suggests that ``crime-ridden precincts'' are those with a high frequency of the good game (e.g., more dangerous situations), which facilitates cooperation in bad games---an important instantiation of spillover effects. Our results on reshuffling, specialization, and dynamic assignments suggest organizational design can be used to address these concerns.  The police force can specialize teams that encounter dangerous situations on the job (who then also accept bribes) and thereby reduce the number of officers in such situations to minimize corruption. The police force can optimize further by rotating such teams to tasks that foster teamwork, building up a norm of cooperation before heading back ``into the field.'' 

Our analysis also relates to the cohesion of communities and societies. \cite{banfield1958,putnam2000}, in their studies of an Italian city and the US over time, respectively, document these spillovers. \cite{banfield1958} describes communities with nuclear families as uncooperative actors. As Lemma~\ref{lem:equilibrium_characterization} shows, a lack of interaction in some dimensions may lead to a breakdown of all cooperation (public goods, commons, etc.). Similarly, the resulting need for cooperation within the family can create powerful incentives to act as a single unit, which could enhance the ability to coordinate on being corrupt. Relatedly, \cite{putnam2000} warns about the consequences of the erosion of civic interactions (``good cooperation'') in the US for other kinds of institutions that require cooperation, e.g., political discourse. Lemma~\ref{lem:equilibrium_characterization} supports his warning of spillovers from cooperation on leisurely activities, such as bowling, to more general foundations of civic behavior. It further provides some insight into what may have caused such a decline in civic norms: formal institutions, such as insurance markets and markets, replacing informal interactions---such as reciprocal insurance and favor exchange---thereby reducing necessary cooperation. 

\section{Bells and Whistles}\label{sec:bells}

In closing, we discuss a few extensions and other variants of the model. 

\paragraph{Hybrid tasks.} When assigning teams to tasks, we focused on tasks that either had only good games or only bad games. Hybrid tasks that involve some random game arrivals could be accommodated, albeit with complications in notation and cases, but would yield similar results throughout. Also, in our model, the set of tasks is fixed. However, when supporting optimal cooperation is challenging, it can be advantageous to introduce new tasks that provide additional benefits from cooperation. These new tasks might not benefit society or directly increase organizational productivity, but they might help sustain cooperation in good games. For instance, scheduling ``team-building'' exercises, social events, and non-production related contests in which team members must cooperate with each other can help induce cooperation in other games as well.   Many organizations use such team-building to enhance the loyalty that people feel towards the units in which they are involved, fostering cooperation.  Although such events may be time-consuming and thus reduce the time that teams spend directly on productive activities, they may end up enhancing productivity and substitute for bad games by enabling cooperation in good games when that might have otherwise been infeasible.

\paragraph{Heterogeneity.} There may also be individual heterogeneity of team members in their preference to engage in collusive behavior. Adding individuals with a strong inclination to defect on bad games could curb collusion, but could also undermine productive cooperation. Similarly, ``changing the culture'' of an organization and increasing awareness of the negative consequences of collusion---that is, shifting team members' preferences to be more averse to collusion on bad tasks---may erode the informal incentives that sustain cooperation on good tasks and productive teamwork. 

\paragraph{Learning and coordination.} We have focused on the incentives for cooperation in our analysis.  There can be other benefits from keeping teams together, such as players learning from each other and anticipating each other's behaviors, etc.  This provides additional reasons for keeping teams together, which could be incorporated into an extended model.  

\paragraph{Monitoring.} Lastly, surveillance and other technologies also play an important role and may be modeled explicitly. Improved monitoring can make cooperation directly enforceable in good games and punishable in bad ones, thereby making incentive issues less relevant. Thus, organizations can benefit from investing in technologies to monitor behavior. For example, if cooperation in the good game can be directly enforced by a monitoring technology, reshuffling teams to reduce residual collusion makes good sense. However, Section~\ref{sec:task_assignment} shows that an organization can benefit from the games' strategic interdependencies. For example, consider an organization that prefers total cooperation to no cooperation. If no cooperation in the bad game is directly enforced via some new technology, then cooperation in the good game may be impaired, as players can now only punish deviation from cooperation via that game and not the bad game. All cooperation may thereby break down.

\paragraph{The continuous time limit.} Time periods in our model are discrete, and a natural question to ask is how the previous results depend on the length of a time period. To this end, we consider the continuous limit of our model; i.e., when time periods become infinitesimally small and, with them, the probability of a game occurring and the potential frequency of reassignments. We will see that whether the designer can improve upon static assignments as the period length shrinks with a dynamic assignment depends on whether the arrival of good games is observable. This suggests that dynamic assignments in the unobservable case are only useful in situations where at most one good game arrives over a nontrivial span of time.

So far, a period has been of length $1$. Suppose now that a period is of length $T<1$, which implies two changes. First, players in period $\tau$ discount payoffs received in period $\tau'>\tau$ by less, specifically, by $\delta^{(\tau'-\tau)T}$. Second, the probabilities of arrival of good and bad games on the two tasks are also reduced, specifically to $a_GT$ and $a_BT$, respectively; and, as $T\rightarrow 0$, games occur according to a Poisson process. As before, the designer can change assignments dynamically and reactively in each period.\footnote{There might be institutional constraints that prevent the designer from adjusting assignments every period.  We do not study this variant, but we do not believe it will change the basic intuition of the results, as it just adds constraints to the designer.}

We first note that as $T$ decreases, while the designer's payoff from any fixed static team structure stays the same, the designer's payoff from an \emph{optimal} static team structure increases. The expected arrival rate of games remains unchanged, but the timing of arrivals is more continuous rather than concentrated at the end of a full period. Under exponential discounting, a continuous stream of possible arrivals within, say, the next two periods is valued more highly than a single arrival exactly one period away. Thus, the same expected rate of future cooperation is effectively worth more. As a result, the designer can sustain cooperation with a smaller share of bad tasks, and the social value under the optimal static assignment increases as $T \to 0$.

Now consider reactive team structures and the case where the designer observes good games occurring (Section~\ref{sec:obs}). The optimal reactive team structure does not change substantively: consider a reactive assignment in which players always cooperate, which is part of an optimal reactive team structure. In the limit as $T\rightarrow 0$, following a good game, teams spend some fixed amount of time on the bad task before returning to the good task, where they wait until a good game arrives, as before. Furthermore, as was true in the $T=1$ case, the social value is higher when reactive assignments are possible compared to static assignments, as the designer can wait for the arrival of a good game.\footnote{Thus, combining this conclusion with the observation in the previous paragraph, both the reactive and static assignments are improving, but the reactive assignment remains a strict improvement.}

Consider next an optimal reactive team structure when the designer does not observe good games occurring (Section~\ref{sec:noobs}) and a reactive assignment in which players always cooperate. As $T\rightarrow 0$, players are immediately transitioned out of the good task, and the time spent on ensuing bad tasks also becomes arbitrarily short. In other words, players are randomly assigned to the good or bad tasks with time-independent probabilities, and the designer cannot benefit from a reactive assignment. Intuitively, if time periods are long, then there is a high chance that a team on the good task faced a good game. Thus, it is as if the designer knows a good game took place. When the length of a period shrinks, the designer loses this knowledge. As a result, making bad games relatively more likely following a good game becomes infeasible. This implies that the gains the designer has from reactive assignments compared to static assignments when good games are not observed vanish as the length of the period goes to zero and arbitrarily frequent adjustments are possible.\footnote{One can show that the designer's payoff improves as $T\rightarrow 0$. The designer's payoff from a static assignment also improves as $T\rightarrow 0$,  and it is the relative improvement that vanishes.} Thus, whether there is a non-vanishing improvement over the static assignment as the time period shrinks depends on the observability of the arrival of good games.

{
\small 
\bibliographystyle{aer}
\bibliography{teamsorganizationculture}
}

\appendix

\section{Proofs}\label{appendix:proofs}

\begin{proof}[Proof of Lemma~\ref{lem:equilibrium_characterization}] 
Suppose \eqref{eq:full_cooperation} holds.  Consider the grim-trigger strategy: `Always cooperate unless any player did not cooperate in the past, at which point always do not cooperate.' Property~\ref{property:max} states that on-path aggregate stage-game payoffs are maximized when every player chooses this strategy. Furthermore, the resulting strategy profile constitutes an equilibrium: Facing game $g$, no player benefits from deviating since not cooperating leads to an increase in the stage-game payoff of $d_g$ and a decrease in future payoffs of $\sum_{g'} \frac{\delta}{1-\delta}p_{g'}c_{g'}$ and by \eqref{eq:full_cooperation} and since not cooperating if someone deviated in the past is a Nash equilibrium in any stage game (an implication of Property~\ref{property:dom}), i.e., the punishment stage is a Nash equilibrium.

Next, suppose \eqref{eq:partial_cooperation} holds for game $g$ but \eqref{eq:full_cooperation} fails. Call the other game $g'$. First, we show that an equilibrium exists with on-path cooperation only in game $g$. Consider the grim-trigger strategy: `Cooperate in $g$ only unless any player deviated from this strategy in the past, in which case do not cooperate in either game'. Facing game $g$, no player benefits from deviating since not cooperating leads to an increase in stage-game payoff of $d_g$ and a decrease in future payoffs of $\frac{\delta}{1-\delta}p_gc_g$ and by \eqref{eq:partial_cooperation}. There is also no benefit from deviating in game $g'$ as all players not cooperating is a Nash equilibrium, and future payoffs are lower after deviation. And lastly, as before, the punishment stage is a Nash equilibrium. We also need to show that there is no other equilibrium that yields higher aggregate payoffs. As before, cooperating in game $g$ generates the highest aggregate payoffs in that game (Property~\ref{property:max}); no cooperation in game $g'$ maximizes aggregate payoffs in that game, provided cooperation is not possible (Property~\ref{property:last}). Thus, we need to check whether there exists an equilibrium in which players may cooperate in game $g'$. To this end, consider any history and suppose players face game $g'$. By Property~\ref{property:star}, a player not cooperating leads to an increase in their stage-game payoff of at least $d_{g'}$ relative to cooperating. By Properties~\ref{property:max} and~\ref{property:minmax}, there exists a player $i$ whose future payoffs decrease by at most $\frac{\delta}{1-\delta}(p_gc_g+p_{g'}c_{g'})$ when not cooperating relative to cooperating. As  \eqref{eq:partial_cooperation} holds for game $g$ but \eqref{eq:full_cooperation} fails, we have $d_{g'}=\max \{d_g,d_g'\}$, and thus player $i$ strictly prefers to not cooperate. Hence, there can be no equilibrium in which players may cooperate in game $g'$.

Lastly, suppose that we are not in either of the above cases. First, note that `Always do not cooperate' constitutes an equilibrium.  We show that there is no other equilibrium that yields higher aggregate payoffs. As in the previous (second) case, a higher aggregate payoff requires an equilibrium in which players may cooperate in some game $g$ (Property~\ref{property:last}). If there may also be cooperation in the other game, call it $g'$, then arguments analogous to those for the first case imply that such cooperation requires $\max \{d_g,d_g'\}\leq \frac{\delta}{1-\delta}(p_gc_g+p_{g'}c_{g'})$; if there is no cooperation in game $g'$, then arguments analogous to those for the second case imply that $d_g\leq \frac{\delta}{1-\delta}p_gc_g$. Either way, \eqref{eq:partial_cooperation} not holding and  \eqref{eq:full_cooperation} not holding for both games imply there cannot be cooperation in equilibrium. 
\end{proof}

\begin{proof}[Proof of Proposition~\ref{prop:optimal_reshuffling}]
Suppose conditions~\ref{cond:deviation}--\ref{cond:relative} given in the propositions hold. By condition~\ref{cond:absolute}, there exists a reshuffling rate $r\in[0,1]$ that solves $d_G=\frac{\delta^e}{1-\delta^e}p_Gc_G$, where $\delta^e\coloneqq \delta(1-r)$. Thus, \eqref{eq:partial_cooperation} holds for the good game. What remains to show is that \eqref{eq:full_cooperation} does not hold. We have
\begin{equation}\label{eq:proof_optimal_reshuffling}
    \frac{\delta^e}{1-\delta^e}(p_Gc_G+p_Bc_B)=d_G+\frac{d_G}{p_Gc_G}p_Bc_B<d_B=\max\{d_G,d_B\},
\end{equation}
where the first equality follows from the definition of $r$ (and $\delta^e$), the inequality follows from conditions~\ref{cond:deviation} and~\ref{cond:relative}, and the second equality again from condition~\ref{cond:deviation}. Hence, \eqref{eq:full_cooperation} does not hold, and there is only cooperation in the good game by Lemma~\ref{lem:equilibrium_characterization}.

Suppose now not all of conditions~\ref{cond:deviation}--\ref{cond:relative} hold. We need to show that optimal cooperation is not possible. If condition~\ref{cond:deviation} fails, then \eqref{eq:partial_cooperation} holding for the good game implies \eqref{eq:full_cooperation} for any $r$; thus, there cannot be optimal cooperation. If condition~\ref{cond:absolute} does not hold, then \eqref{eq:partial_cooperation} does not hold either for the good game for any $r$, and again there cannot be optimal cooperation. Lastly, suppose condition~\ref{cond:relative} does not hold.  We need to show that \eqref{eq:partial_cooperation} for the good game cannot hold while \eqref{eq:full_cooperation} fails for any reshuffling rate $r\in [0,1]$. As the right-hand sides of \eqref{eq:full_cooperation} and \eqref{eq:partial_cooperation} decrease in $r$ while the respective left-hand sides are independent of it, it suffices to consider the largest $r$ such that \eqref{eq:partial_cooperation} holds, that is, the $r$ that solves \eqref{eq:partial_cooperation} for the good game as an equality. A set of equations analogous to those in~\eqref{eq:proof_optimal_reshuffling} with the inequality reversed (and becoming a weak inequality) as condition~\ref{cond:relative} fails then shows that \eqref{eq:full_cooperation} indeed does not hold.
\end{proof}

\begin{proof}[Proof of Observation~\ref{obvs:full_specialization}]
Consider the full-specialization team structure $\lambda$ with $\lambda(\nu^g(t_g))=q_g$ for $t\in\{G,B\}$, $r_G=0$, and $r_B=1$, and first suppose $d_G\leq \frac{\delta}{1-\delta}a_Gc_G$. Consider a team assigned to $\nu^G$. Property~\ref{property:max} implies that on-path aggregate stage-game payoffs are maximized when every player chooses grim-trigger strategy `Always cooperate unless any player did not cooperate in the past, at which point always do not cooperate.' Furthermore, the resulting strategy profile constitutes an equilibrium: Facing the good game, the only game players of such team face, no player benefits from deviating since not cooperating leads to an increase in the stage-game payoff of $d_G$ and a decrease in future payoffs of $\frac{\delta}{1-\delta}a_Gc_G$ and by assumption and since not cooperating if someone deviated in the past is a Nash equilibrium (an implication of Property~\ref{property:dom}), i.e., the punishment stage is a Nash equilibrium. Consider now a team assigned to $\nu^B$. As $r_B=1$, players' discount factor is $0$ and so they play the unique stage game equilibrium, which is no cooperation. Thus, there is cooperation only on good games.

Suppose now that $d_G>\frac{\delta}{1-\delta}a_G c_G$, and again consider a team assigned to $\nu^G$. Every player choosing strategy `Always not cooperating' constitutes an equilibrium as every player not cooperating is stage-game equilibrium. Using Properties~\ref{property:max} and~\ref{property:last}, to see whether this equilibrium maximizes maximizes on-path aggregate stage-game payoffs, we need to check whether there exists an equilibrium in which players may cooperate. To this end, consider any history and suppose players face the good game. By Property~\ref{property:star}, a player not cooperating leads to an increase in their stage-game payoff of at least $d_G$ relative to cooperating. By Properties~\ref{property:max} and~\ref{property:minmax}, there exists a player $i$ whose future payoffs decrease by at most $\frac{\delta}{1-\delta}a_Gc_G$ when not cooperating relative to cooperating. Since $d_G>\frac{\delta}{1-\delta}a_G c_G$ by assumption, player $i$ has a profitable deviation and there can be no equilibrium in which players may cooperate. 
\end{proof}

\begin{proof}[Proof of Proposition~\ref{prop:task_assignment}]
Suppose not every team structure that has only assignments that lead to no cooperation is optimal, i.e., we are not in case (ii). Consider an optimal team structure $\lambda$ with reshuffling rates $r_\nu$ for $\nu \in \supp(\lambda)$. We have to show that $\lambda$ has positive mass only on the two assignments specified in the proposition. 

Since we are not in case (ii) and since all team structures that only have assignments that lead to no cooperation yield the same social value, 0, no such team structure is optimal, and there exists an assignment $\nu \in \supp(\lambda)$ with $r_\nu$ such that there is cooperation.  We first show that $\nu=\nu^{coop}$ and $r_{\nu}=r_{\nu^{coop}}=0$, where $\nu^{coop}$ is defined in \eqref{hybrid}. First, if $r_{\nu^{coop}}>0$, there is no cooperation, so if $\nu^{coop}\in \supp(\lambda)$ and $r_{\nu^{coop}}>0$, then it must be that any team structure with only assignments with no cooperation is optimal, i.e., we are in case (ii) of the proposition. Thus, if $\nu^{coop}\in \supp(\lambda)$, then  $r_{\nu^{coop}}=0$. 

Since $d_G>\frac{\delta}{1-\delta}a_Gc_G$, there can be no optimal cooperation (Observation~\ref{obvs:full_specialization}). By optimality, there can also be no cooperation only on bad games; otherwise, the designer could improve their value by setting $r_\nu=1$ precluding cooperation. Thus, there is cooperation on both games given $\nu$. 

If $\nu(t_G)>\nu^{coop}(t_G)$, then there is no total cooperation as $\max_g \{d_g\} > \frac{\delta}{1-\delta} \left(\nu(t_G) a_G c_G + (1-\nu(t_G)) a_B c_B\right)$ by definition of $\nu^{coop}$ and Lemma~\ref{lem:equilibrium_characterization}. If $\nu(t_G)<\nu^{coop}(t_G)$, then there exists $x\in (0,1)$ such that $x+(1-x)\nu^{coop}(t_B)=\nu(t_B)$. Consider a team structure $\lambda'$ equal to $\lambda$ but where $\nu$ is replaced by $\nu^{B}$ with $\nu^{B}(t_B)=1$ and $r_B$ large enough to preclude cooperation (recall there cannot be $\nu^B\in \supp(\lambda)$ with $r_B$ low enough so that there is cooperation and so if $\nu^B\in \supp(\lambda)$, adopting the associated reshuffling rate precludes cooperation), and $\nu^{coop}$ with $r_{\nu^{coop}}=0$. In particular, let $\lambda'(\nu^{B})=\lambda(\nu^{B})+\lambda(\nu)x$ and $\lambda'(\nu^{coop})=\lambda(\nu^{coop})+\lambda(\nu)(1-x)$. Lemma~\ref{lem:equilibrium_characterization} implies there is total cooperation on $\nu^{coop}$; reshuffling ensures there is no cooperation on $\nu^{B}$. Note that there is strictly less cooperation on bad games while the amount of cooperation on good games is unchanged; i.e., $\nu(t_G)<\nu^{coop}(t_G)$ violates designer optimality, and so $\nu(t_G)=\nu^{coop}(t_G)$. 

Now, suppose that there is no cooperation on $\nu$. Since no team structure that only has positive weight on assignments that lead to no cooperation is optimal, it must be that the social value only considering teams assigned to $\nu^{coop}$ is strictly positive. This social value is linear mass $\lambda$ puts on $\nu^{coop}$, with the remaining mass contributing 0 to the social value. Hence, optimality requires that the mass of workers assigned to $\nu^{coop}$ is maximized; this implies that there cannot be a positive mass on both good and bad tasks assigned in $\nu$.
\end{proof}

\begin{proof}[Proof of Lemma~\ref{claim:pure}]
Consider any reactive assignment $m$ with reshuffling rate $r_m$,  and a strategy profile $\vec{\sigma}$. We construct a new strategy profile $\vec{\sigma}'$ that will also be an equilibrium that satisfies the conditions in the lemma and has weakly greater aggregate payoffs, and strictly so unless the initial equilibrium satisfies the conditions in the lemma.

Recall that $m$ can be equivalently viewed as a map from states $s\in \mathcal{S}\coloneqq (\mathcal{T}\times \{G,B,\emptyset\})^K$ into $\Delta(\mathcal{S})$. We adopt this view. Let $\mathcal{A}$ denote the set of states $s$ such that there exists a history $h\in \mathcal{H}$ such $(h,s)$ is reached with positive probability given $\vec{\sigma}$, and $\\vec{\sigma}(h,s)$ assigns a positive probability of cooperation for all players. Let $\vec{\sigma}'$ be defined by the following grim-trigger strategy: `Given any state $s \in (\mathcal{T}\times \{G,B,\emptyset\})^K$ and history $h\in\mathcal{H}$, cooperate if and only if $s \in \mathcal{A}$ unless any player deviated from this strategy in the past, in which case do not cooperate in either game.' 

As $m$ is unchanged, i.e., the distribution over states following each state, players' aggregate payoffs have increased using Properties~\ref{property:max} and~\ref{property:last}, which imply that the sum of on-path payoffs for each task and game pair increased, and strictly so unless on-path play already consisted of $C^n$ and $N^n$ only. 

What remains to show is that $\vec{\sigma}'$ is an equilibrium. To this end, first, note that whenever ${\sigma}'$ prescribes not to cooperate, it is optimal to do so since it is a strictly dominant action in the stage game (Property~\ref{property:dom}) and a player's future payoff from deviation profiles in the off-path continuation is weakly less than that from the on-path continuation which might contain some cooperation profiles (Property~\ref{property:max}). 

Next, consider any state $s$ in which players are prescribed to cooperate given $\vec{\sigma}'$. Since players are prescribed to cooperate, there exists a history $h$ such that given strategy profile $\vec{\sigma}$ and $(h,s)$, there is a positive probability of all players cooperating. We show that players not having a profitable deviation and not cooperating for sure given $\vec{\sigma}$ and $(h,s)$ implies they also do not have one given $\vec{\sigma}'$ and state $s$. We make three observations concerning the relative incentives to cooperate given $\vec{\sigma}$ and $(h,s)$ and  $\vec{\sigma}'$ and state $s$. First, using Property~\ref{property:star}, given $\vec{\sigma}$ and state $(h,s)$, a player not cooperating leads to an increase in their stage-game payoff of at least $d_{g}$ relative to cooperating; given $\vec{\sigma}'$ and $s$, that increase is exactly $d_g$. Thus the gains from deviating to $N$ in the stage game are higher for $\vec{\sigma}$ and $(h,s)$ than $\vec{\sigma}'$ and $s$ for all players.  We next show that there is some player $i$ who has less to lose in continuation payoffs in the former case than the later case.  Thus the equilibrium condition for $\vec{\sigma}$ for player $i$ implies the equilibrium condition for $\vec{\sigma}'$ for player $i$. First, as previously noted, players' future aggregate payoffs have increased given $\vec{\sigma}'$; thus, there exists a player $i$ whose future payoffs after cooperating are larger given $\vec{\sigma}'$ and $s$ than given $\vec{\sigma}$ and $(h,s)$. Next, a player's future payoffs after not cooperating are less given $\vec{\sigma}'$ and $s$ (in which case they face $N^{n-1}$ thereafter in the off-path play) than they are given $\vec{\sigma}$ and $(h,s)$ (this follows from Properties~\ref{property:max} and~\ref{property:minmax}). These two facts together mean that the loss in continuation payoff for player $i$ from deviating is larger in $\vec{\sigma}'$ than $\vec{\sigma}$.
Thus, since player $i$ cooperates with positive probability given  $\vec{\sigma}$ and $(h,s)$, player $i$ also cooperates given $\vec{\sigma}'$ and $s$. Since players have the same strategy, all players are incentivized to cooperate, and $\vec{\sigma}$ is indeed an equilibrium. 

To complete the proof, we need to show that the set of task and game pairs with cooperation is independent of the player-optimal equilibrium considered. Arguments analogous to those above show that an appropriately defined strategy profile with on-path cooperation on the set states for which there is cooperation in either player-optimal equilibrium is also an equilibrium and improves players' payoff, and strictly so unless the sets of task and game pairs with cooperation were the same across equilibria. We omit the details.
\end{proof}

\begin{proof}[Proof of Proposition~\ref{prop:when}]
    Consider a class of reactive assignments $m$ parameterized by $w_{gg'}\in[0,1]$ for $g,g'\in\{G,B\}$ defined as follows. Consider any state $[(t_\tau, g_\tau)]_{\tau=0}^K$ with current task $t_K=t_{g}$; then let $m([(t_\tau, g_\tau)]_{\tau=0}^K)(t_g')=w_{gg'}$ for $w_{gg'}$ such that $w_{gG}+w_{gB}=1$ for all $g \in \{G,B\}$. Note that for $w_{Gg}=w_{Bg}=\nu^{coop}(t_g)$, then  $m$ corresponds to the same static assignment as $\nu^{coop}$.

    For interior transition probabilities, let $\overline{m}(t_G)$ denote the unique steady state mass on the good task; we have $\overline{m}(t_G)=\frac{w_{BG}}{w_{BG}+w_{GB}}$. For $w_{Gg}=w_{Bg}=\nu^{coop}(t_g)$, let this mass be denoted by $\overline{m}^{coop}(t_G)$. Note that all $w_{BG},w_{GB}$ (with $w_{BB}=1-w_{BG},w_{GG}=1-w_{GB}$) such that 
    \begin{equation}\label{eqn:normalization}
        w_{GB}=w_{BG}\frac{1-\overline{m}^{coop}(t_G)}{\overline{m}^{coop}(t_G)}
    \end{equation} 
    lead to the same steady state mass on the good task, $\overline{m}^{coop}(t_G)$, and therefore also on the bad task $\overline{m}^{coop}(t_B)$. As a result, all such $w_{BG},w_{GB}$ imply the same average benefit from future cooperation in the two assignments $m$ and $\nu^{coop}$. 
    
    Let $\overline{g},\underline{g}$ be such that $d_{\overline{g}}>d_{\underline{g}}$, i.e., it is harder to incentivize cooperation in game $\overline{g}$. The conditions of the proposition imply $a_Gc_G<a_Bc_B$. Therefore, in an assignment $m$ that is not the same static assignment as $\nu^{coop}$ the average benefit from future cooperation can be made strictly larger when teams are in game $\overline{g}$ than when they are in game $\underline{g}$ by sending them more frequently to the bad task when in $\overline{g}$, i.e., by increasing $w_{\overline{g}B}$.  Furthermore, this can be done while maintaining the same unconditional benefit from future cooperation in this modified $m$ and original $\nu^{coop}$ by decreasing $w_{\underline{g}B}$ according to Equation~\ref{eqn:normalization}. By continuity and if $d_{\overline{g}}$ is strictly less than $d_{\underline{g}}$, there are thus values of $w_{BG},w_{GB}$ (with $w_{BB}=1-w_{BG},w_{GG}=1-w_{GB}$) such that the benefit from future cooperation is strictly greater than the deviation payoff in both tasks, which allows the designer to strictly improve the social value, as in the proof of Lemma~\ref{lem:design_opt}.
\end{proof}

\begin{proof}[Proof of Lemma~\ref{lem:design_opt}]
Consider any reactive team structure $\lambda$, $m \in \supp(\lambda)$ with reshuffling rate $r_m$ and steady-state distribution $\overline{m}$, and player-optimal equilibrium $\vec{\sigma}_m$. 

Let $\mathcal{S}^m\subseteq \mathcal{S}\coloneqq (\mathcal{T}\times \{G,B,\emptyset\})^K$ be the set of states with positive mass given steady state $\overline{m}$. Let $\mathcal{A}^C\subseteq \mathcal{S}^M$ be the set of states in which players cooperate on-path (which is well-defined given Lemma~\ref{claim:pure}). And let $\mathcal{A}^{\emptyset}$ be the set of states $s^{\emptyset}$ such that there exists a state $s  \in \mathcal{A}^C$ that only differs in the $K$-th game entry from state $s$; in particular, state $s$ has a game $g\in\{G,B\}$ and state $s^\emptyset$ has no game. Let $\mathcal{A}\coloneqq\mathcal{A}^C\cup \mathcal{A}^\emptyset$; if $\mathcal{A}=\emptyset$ or $\mathcal{A}^C\cup \mathcal{A}^\emptyset=\mathcal{S}^m$, then the claim in the lemma holds. 

Thus, suppose $\emptyset \subset \mathcal{A} \subset\mathcal{S}^m$. We construct an alternative reactive team structure $\lambda^*$ that improves the designer's value. To this end, we define two reactive assignments, $m^C$ and $m^{NC}$. 

For any $s,s' \in \mathcal{S}$, if  $s \in \mathcal{A}$, and  
\begin{itemize}
    \item if $s' \in \mathcal{A}$, then 
    let $\mathcal{S}^l_{ss'}$ be the set of length $(l+1)$ state sequences  starting at $s$ (a state with cooperation), passing through states in $\mathcal{S}\setminus\mathcal{A}$ (states without cooperation), and ending in $s'$ (another state with cooperation), i.e., sequences $(s_0,\ldots,s_{l+1})$ such that $s_0=s,s_{l+1}=s'$ and $(s_1,\ldots,s_{l})\in(\mathcal{S}\setminus\mathcal{A})^{l}$. Define 
    $$m^C(s'|s)=m(s'|s)+\sum_{l=1}^\infty \left[\sum_{(s_0,\dots,s_{l+1})\in \mathcal{S}^l_{ss'}} \left[\prod_{i=1}^{l+1} m(s_i|s_{i-1})\right]\right],$$
    \item if $s' \not\in \mathcal{A}$, then $m^C(s'|s)=0$, 
\end{itemize}
and if $s \not\in \mathcal{A}$, and 
\begin{itemize}
    \item if $s' \in \mathcal{A}$, then $m^C(s'|s)=0$,
    \item if $s' \not\in \mathcal{A}$, then 
    let $\mathcal{S}^l_{ss'}$ be the set of length $(l+1)$ state sequences  starting at $s$ (a state without cooperation), passing through states in $\mathcal{A}$ (states with cooperation), and ending in $s'$ (another state without cooperation), i.e., sequences $(s_0,\ldots,s_{l+1})$ such that $s_0=s,s_{l+1}=s'$ and $(s_1,\ldots,s_{l})\in(\mathcal{A})^{l}$. Define 
    $$m^C(s'|s)=m(s'|s)+\sum_{l=1}^\infty \left[\sum_{(s_0,\dots,s_{l+1}) \in \mathcal{S}_{ss'}^l} \left[\prod_{i=1}^{l+1} m(s_i|s_{i-1})\right]\right],$$
\end{itemize}
and let $m^{NC}=m^C$,  with $\overline{m}^C$  chosen to have mass only on states in $\mathcal{A}$ and $\overline{m}^{NC}$ to have mass only on states in $\mathcal{S}\setminus\mathcal{A}$.

In words, $m^{NC}$ and $m^C$ are ``skipping'' states: either the Markov chain is in $\mathcal{A}$, in which case it skips states in $\mathcal{S}\setminus \mathcal{A}$ with no cooperation, or it is in $\mathcal{S}\setminus \mathcal{A}$, in which case it skips states in $\mathcal{A}$ with cooperation. 

As a result, for appropriately chosen $\lambda'(m^{C}),\lambda'(m^{NC})$, where $\lambda'(m^{C})+\lambda'(m^{NC})=\lambda(m)$, and steady states for $m^C$ and $m^{NC}$, we have $\overline{m}=\overline{m}^C+\overline{m}^{NC}$. 
Let $r_{m^C}=0$, and note that assuming there is cooperation in every state $s \in \mathcal{A}^C$ given reactive assignment $m^C$, the future benefit from cooperation strictly increased, and so players are indeed strictly incentivized to cooperate. Furthermore, let $r_{m^{NC}}=1$, note that there is no cooperation on any state $s\in \mathcal{A}^{NC}$ given reactive assignment $m^{NC}$. Thus, the reactive team structure $\lambda'$ covers each task at the same rate as $\lambda$ and has the same value to the designer. 

For a strict improvement, consider a state $s_G \in \mathcal{A}^C$, where the $K$-th game entry is a good game, and state $s_G^\emptyset$, which equals $s_G$ with the exception of the $K$-th game entry which is no game. Such states exist; otherwise, there is only on-path cooperation on bad games in the reactive assignment $m$, which violates designer optimality. Define two new reactive assignments, $m^*$ and $m_B$. For $s=s_G,s_G^\emptyset$, let 
$$
m^*(\cdot|s)=(1-\epsilon)m^C(\cdot|s)+\epsilon (a_G \delta_{s_G}+(1-a_G)\delta_{s_G^\emptyset}),
$$
where $\delta_{s_G},\delta_{s_G^\emptyset}$ are the Dirac measures of $s_G$ and $s_G^\emptyset$, respectively, and $\epsilon\in (0,1]$, and $m^*(\cdot|s)=m^C(\cdot|s)$ for all $s\neq s_G,s_G^\emptyset$.

In words, reactive assignment $m^*$ equals $m^C$ with the exception that there is an additional chance of remaining on the good task for some history. Note that there exists a steady state distribution $\overline{m}^*$ with full support on $\mathcal{A}$ and strictly more weight on good tasks relative to $\overline{m}^C$. With $\epsilon>0$ small enough and $r_{m^*}=0$ there is still on-path cooperation everywhere (since players were strictly incentivized to cooperate everywhere given $m^C$).

Reactive assignment $m_B$ is a constant assignment to the bad task with $r_{m_B}=1$. (Formally, it is defined using the equivalent definition of a reactive assignment as a function from $(\mathcal{T}\times \{G,B,\emptyset\})^K$ to $\Delta(\mathcal{T})$ as $m_B=\delta_{t_B}$, where $\delta_{t_B}$ is the Dirac measure of the bad task.) 

Since the steady state of $m^*$ has strictly more weight on good tasks than $m^C$ and the steady state of $m_B$, which has full weight on bad tasks, has strictly less weight on good tasks than $m^C$ ($m^C$ cannot have full weight on good tasks; otherwise, there would be no cooperation as we $d_G>\frac{\delta}{1-\delta}a_Gc_G$ in this section), there exist some weights $\lambda^*(m^*),\lambda^*(m_B)$, such that the combination of $m^*$ and $m_B$ covers tasks at the same rate as $m^C$. It follows that reactive team structure $\lambda^*$, defined by the weights above for reactive assignments $m^*$ and $m_B$, and otherwise equal to $\lambda'$, covers tasks at the same rate as $\lambda'$ (and hence $\lambda$); however, since there is relatively more weight on good tasks in $m^*$ than $m^C$, and players always cooperate, and players do not cooperate given $m_B$, the designer's value strictly increased. Hence, $\emptyset \subset \mathcal{A} \subset\mathcal{S}^m$ leads to a contradiction, completing the proof.
\end{proof}

\begin{proof}[Proof of Proposition~\ref{prop:main}]
Consider $m \in \supp(\lambda)$, with reshuffling rate $r_m$ and steady state $\overline{m}$, with fully cooperative teams (Lemma~\ref{lem:design_opt}). 

We first show by contradiction that players must be indifferent between cooperating and not cooperating when playing a good game as claimed in the first implication of the proposition.  We define an alternative reactive assignment $m'$ with the same designer value, where continuation after a good game is independent of the history. 
Let $\mathcal{S}\coloneqq \left(\mathcal{T}\times \{G,B,\emptyset\}\right)^K$ and let $\mathcal{A}_G \subseteq \mathcal{S}$ be the set of states $s_G$ with a good game as its $K$-th game entry. For all $s_G \in \mathcal{A}_G$, let
$$
m'(\cdot|s_G)=\frac{\sum_{s' \in \mathcal{A}_G}\overline{m}({s'})m(\cdot|s')}{\sum_{s'' \in \mathcal{A}_G}\overline{m}({s''})},
$$
and for all $s \not\in \mathcal{A}_G$, let $m'(\cdot|s)=m(\cdot|s)$. 
Since $\sum_{s\in \mathcal{S}} \overline{m}(s)m(\cdot|s)=\sum_{s\in \mathcal{S}}  \overline{m}(s)m'(\cdot|s)$, $\overline{m}$ continues to be a steady state. Assuming players continue to cooperate on-path in all states, it is indeed optimal to do so for any state with a bad game as the $K$-th game entry since $d_B<\frac{\delta}{1-\delta}a_Gc_G$. Furthermore, note that as the continuation after a good game is now the same across states, the future benefit from cooperation following a good game is now the same across states---it is the weighted average; hence, players indeed continue to cooperate on good games. Together with the observation that the steady state did not change, this implies there is no change to the designer's value between $m$ and $m'$.  However, if players were strictly incentivized to cooperate in $m$, they would be strictly incentivized to cooperate given $m'$ and $r_{m'}=0$, violating designer optimality as in the proof of Lemma~\ref{lem:design_opt}.

We next define another reactive assignment $m^*$ that satisfies the structure discussed in the second two implications of the proposition and has strictly better value than $m'$ unless $m'$ also satisfies that structure, thus proving such structure is necessary as claimed. Given $m'$, consider a team that faced a good game in some period $l$. For $k\geq1$, define two events: the first is that a bad game is faced in period $l+k$ and the second is that no good game was faced in the periods between $l$ and $l+k$. Let $p_k$ denote the probability of the intersection of those two events. Let $N_B\coloneqq \left\lfloor \frac{\sum_{k=1}^\infty p_k}{a_B}\right\rfloor$, i.e., $N_B$ denotes the integer part of the number of bad tasks that would result in the same expected number of bad games. 
Consider states $s$ with a good game as the game entry in the $l$-th position for some $l\geq K-N_B$. Then, let 
$$
m^*(\cdot|s)=\delta_{t_B},
$$
where $\delta_{t_B}$ is the Dirac measure of task $t_B$ ($\delta_{t_G}$ is defined analogously). Consider states $s$ with a good game as the game entry in the $K-N_B-1$-th position. Let $x\coloneqq \frac{\sum_{k=1}^\infty p_k}{a_B}-N_B$. 
Then, let 
$$
m^*(\cdot|s)=x\delta_{t_B}+(1-x)\delta_{t_G}.
$$
Consider states $s$ such that the $K$-th task entry is $t_G$ and the $K$-th game entry is no game.  Then, let 
$$
m^*(\cdot|s)=\delta_{t_G}.
$$
Note, by design, $m^*$ satisfies the structure claimed in the lemma. 

To argue about the designer's value given $m^*$, first note that by design, there exists a steady state of $m^*$ such that the distribution over tasks is the same as in the steady state of $m'$.  We argue that $m^*$ has full cooperation, thus implying the designer's value is the same as in $m'$.

Assuming players continue to cooperate on-path in all states, as before, it is indeed optimal to do so for any state with a bad game as the $K$-th game entry since $d_B<\frac{\delta}{1-\delta}a_Gc_G$. 

Next, consider a state $s$ with a good game as the $K$-th game. Suppose given $m^*$ (resp.\ $m'$), the benefit from cooperation following any other state with a good game as the $K$-th game is $V_G^*$ (resp.\ $V_G'$). 
Note that $a_Bc_B + \delta a_G(c_G+V_G^*) > a_G(c_G+V_G^*)$ (resp.\ $a_Bc_B + \delta a_G(c_G+\delta V_G^') > a_G(c_G+\delta V_G^')$), since the conditions of the proposition imply $a_Gc_G<a_Bc_B$, and since $V_G^* \leq \frac{1}{1-\delta}a_Bc_B$ (resp.\ $V_G^' \leq \frac{1}{1-\delta}a_Bc_B$). Thus, the future benefit from cooperation following $s$ strictly increases if players are assigned to bad tasks before good tasks.
Given $m^*$, bad tasks arrive as soon as possible; specifically, $\sum_{k=1}^Lp_k \leq \sum_{k=1}^Lp^*_k$ for all $L\geq 1$, where $p^*_k$ is defined analogously to $p_k$ given $m^*$. 
Hence, assuming the benefit from cooperation following any state other than $s$ with a good game as the $K$-th game is $V_G^'$ in $m^*$ (rather than $V_G^*$), the benefit from cooperation after $s$ is larger given $m^*$ than $m'$. Furthermore, the comparison is strict unless $p_k=p^*_k$ (implying $m'$ and $m$ have the structure given in the proposition). The increased incentive to cooperate in $s$, in turn, implies $V_G^* \geq V_G'$, and, again, strictly so unless $p_k=p^*_k$; but strict incentives preclude designer optimality as in the proof of Lemma~\ref{lem:design_opt}.   

For the last statement in the proposition, suppose all designer-optimal reactive team structures $\lambda$ involve some reactive assignment with only fully cooperative teams. Consider $m,m'\in\supp(\lambda)$ such that $m$ is fully cooperative and $m'$ is fully non-cooperative. Note that there is a choice of $m$ that has strictly positive social value for the designer (as otherwise the designer gets zero social value from $\lambda$ and hence a static structure with full remixing---and hence no cooperation---is also optimal violating the assumption that all optimal structures involve some cooperation).  Thus, if there is a positive mass on both tasks in the steady state associated with $m'$, then the reactive assignment with cooperation $m$ can receive more mass, which strictly increases the designer's value. Furthermore, suppose there are two fully non-cooperative assignments $m_G$ and $m_B$ with mass only on the good (resp.\ bad) task and $\lambda(m_G),\lambda(m_B)>0$. Then the designer can improve by moving teams from $m_B$ and $m_G$, which yield zero social value, to $m$, which yields positive social value, in a proportion that maintains the task assignment constraint (\ref{eq:reactiveconstraint}).
\end{proof}

\begin{proof}[Proof of Proposition~\ref{prop:main_g_unob}] 
Consider $m \in \supp(\lambda)$, with reshuffling rate $r_m$ and steady state $\overline{m}$, with fully cooperative teams (Lemma~\ref{lem:design_opt}). As a preliminary step, we consider an alternative reactive assignment $m'$ with the same designer value, where continuation after a good task is independent of the history. 

Let $\mathcal{S}\coloneqq \left(\mathcal{T}\times \{G,B,\emptyset\}\right)^K$ and let $\mathcal{A}_G \subseteq \mathcal{S}$ be the set of states $s_G$ with a good task as its $K$-th game entry. For all $s_G \in \mathcal{A}_G$, let
$$
m'(\cdot|s_G)=\frac{\sum_{s' \in \mathcal{A}_G}\overline{m}({s'})m(\cdot|s')}{\sum_{s'' \in \mathcal{A}_G}\overline{m}({s''})},
$$
and for all $s \not\in \mathcal{A}_G$, let $m'(\cdot|s)=m(\cdot|s)$. 
Since $\sum_{s\in \mathcal{S}} \overline{m}(s)m(\cdot|s)=\sum_{s\in \mathcal{S}}  \overline{m}(s)m'(\cdot|s)$, $\overline{m}$ continues to be a steady state. Assuming players continue to cooperate on-path in all states, it is indeed optimal to do so for any state with a bad game as the $K$-th game entry since $d_B<\frac{\delta}{1-\delta}a_Gc_G$. Furthermore, note that as the continuation after a good game is now the same across states, the future benefit from cooperation following a good game is now the same across states---it is the weighted average; hence, players indeed continue to cooperate on good games. Also, note that players must have been indifferent between cooperating or not in good games given $m$; otherwise, they would be strictly incentivized to cooperate given $m'$ and $r_{m'}=0$, violating designer optimality as in the proof of Lemma~\ref{lem:design_opt}. Lastly, note that there is no change to the designer's value. 

We next define another reactive assignment $m^*$. Given $m'$, consider a team assigned to the good task in some period $l$. For $k\geq1$, define two events: the first is that a bad game is faced in period $l+k$ and the second is that the team has not been assigned the good task in the periods between $l$ and $l+k$. Let $p_k$ denote the probability of the intersection of those two events. Let $N_B\coloneqq \left\lfloor \frac{\sum_{k=1}^\infty p_k}{a_B}\right\rfloor$, i.e., $N_B$ denotes the integer part of the number of bad tasks that would result in the same expected number of bad games. 
Consider states $s$ with a good task as the task entry in the $l$-th position for some $l\geq K-N_B$. Then, let 
$$
m^*(\cdot|s)=\delta_{t_B},
$$
where $\delta_{t_B}$ is the Dirac measure of task $t_B$ ($\delta_{t_G}$ is defined analogously). Consider states $s$ with a good task as the task entry in the $K-N_B-1$-th position. Let $x\coloneqq \frac{\sum_{k=1}^\infty p_k}{a_B}-N_B$. 
Then, let 
$$
m^*(\cdot|s)=x\delta_{t_B}+(1-x)\delta_{t_G}.
$$
Note, by design, there exists a steady state of $m^*$ such that the distribution over tasks is the same as in the steady state of $m'$.

Assuming players continue to cooperate on-path in all states, as before, it is indeed optimal to do so for any state with a bad game as the $K$-th game entry since $d_B<\frac{\delta}{1-\delta}a_Gc_G$. Next, fixing the future benefit from cooperation following a good game, note that the future benefit from cooperation strictly increases if a good game is delayed by a period and players are assigned to a bad task for one period instead; this follows as the conditions of the proposition imply $a_Gc_G<a_Bc_B$. Thus, players indeed continue to cooperate on good games. Furthermore, unless $m'$ already had the structure given in the proposition, players are now strictly incentivized to cooperate, precluding designer optimality as in the proof of Lemma~\ref{lem:design_opt}. 

For the last statement in the proposition, suppose all designer-optimal reactive team structures $\lambda$ involve some reactive assignment with only fully cooperative teams. Consider $m,m'\in\supp(\lambda)$ such that $m$ is fully cooperative and $m'$ is fully non-cooperative. Note that there is a choice of $m$ that has strictly positive social value for the designer (as otherwise the designer gets zero social value from $\lambda$ and hence a static structure with full remixing -- and hence no cooperation -- is also optimal violating the assumption that all optimal structures involve some cooperation).  Thus, if there is a positive mass on both tasks in the steady state associated with $m'$, then the reactive assignment with cooperation $m$ can receive more mass, which strictly increases the designer's value. Furthermore, suppose there are two fully non-cooperative assignments $m_G$ and $m_B$ with mass only on the good (resp.\ bad) task and $\lambda(m_G),\lambda(m_B)>0$. Then the designer can improve by moving teams from $m_B$ and $m_G$, which yield zero social value, to $m$, which yields positive social value, in a proportion that maintains the task assignment constraint (\ref{eq:reactiveconstraint}).

\end{proof}

\newpage

\begin{center}
{\Large {Online Appendix for}}\\[1em]
{\Large {``Interactions across multiple games: 
cooperation, corruption, and organizational design''}}\\[1.2em]

{\large
{Jonathan Bendor, Lukas Bolte, Nicole Immorlica, Matthew O.\ Jackson}
}
\end{center}

\section*{Comparative Statics in Payoffs and Design Implications}\label{sec:changing_payoffs}

Given that payoffs in one game have spillover effects on the other game, organizations can use bonuses and penalties to alter game payoffs and thereby induce optimal (or better) cooperation. Here, we study how such changes in payoffs affect the potential for cooperation. Doing so also helps us understand why some organizations or countries have optimal cooperation while others are stuck with tolerating bad cooperation or having no cooperation at all. 

We identify a fundamental difference in the incentives provided by changing the benefits from cooperation (the $c_g$'s, e.g., giving bonuses for team production) versus the temptations to deviate (the $d_g$'s, e.g., rewarding whistle-blowing). When examining a single repeated game in isolation, incentives depend only on the \emph{relative} size of deviation temptations to cooperation benefits---the ratio $d_g/c_g$. There exists a subgame-perfect equilibrium in which players cooperate perpetually if and only if $ \frac{d_g}{c_g}\leq \frac{\delta}{1-\delta}p_g$. This is not so with more than one game: cooperating in one game provides incentives to cooperate in another, but deviation payoffs affect the incentives in one game at a time. 

This asymmetry makes the comparative statics of multiple games differ from those of standard repeated games. For example, one might expect that improving the benefits from good cooperation would enable cooperation in the good game only, possibly with (some) reshuffling. But this is not always so. To see why, consider the case of $d_G\geq d_B$. Then, whenever team members can sustain cooperation in good situations, the temptation to deviate from bad cooperation is also less than the continuation value; hence, they can collude in bad situations as well. This is true regardless of the level of payoffs to cooperation. Thus, in order for policies that increase $c_G$ to impact optimal cooperation at all, deviation temptations must already be lower in the good game than the bad. In that case, i.e., when  $d_G<d_B$, optimal cooperation can be enabled via sufficient increases to $c_G$. 

Panels~(a) and~(b) of Figure~\ref{fig:comp_stats} depict how changing the benefits from cooperation in the good game can make optimal cooperation feasible. Reshuffling decreases the effective discount factor, i.e., $(1-r)\delta$, moving the outcome of the game from $x_{\text{initial}}=(\frac{\delta}{1-\delta}p_B,\frac{\delta}{1-\delta}p_G)$ in direction of the origin. Panel~(a) displays an initial situation where $d_B>d_G$ but $c_G$ is too low to reshuffle teams appropriately and achieve optimal cooperation.  Increasing $c_G$ to some $c_G'$ rotates the line delineating the region of total cooperation counter-clockwise. For a large enough increase (see Panel~(b)), reshuffling teams and inducing optimal cooperation is feasible---there exists a reshuffling rate $r$ such that  $x_{\text{reshuffled}}=(\frac{(1-r)\delta}{1-(1-r)\delta}p_B,\frac{(1-r)\delta}{(1-r)1-\delta}p_G)$ is in the region with optimal cooperation.

\begin{figure}[ht]
\small
\captionsetup{width=0.9\textwidth}
    \centering
    \begin{subfigure}[t]{0.5\textwidth}
        \centering
        \begin{tikzpicture}
        \def \xscale {.25} 
        \def \yscale {.2}
            \draw[thick,->] (0*\xscale,0*\yscale) -- (20.5*\xscale,0*\yscale) node[anchor=north west] {$\frac{\delta^e}{1-\delta^e}p_B$};
            \draw[thick,->] (0*\xscale,0*\yscale) -- (0*\xscale,20.5*\yscale) node[anchor=south east] {$\frac{\delta^e}{1-\delta^e}p_G$};
            \draw[thick] (0*\xscale,17.5*\yscale) node [anchor=east] {$d_B/c_G$} --(17.5*\xscale,0*\yscale) node [anchor=north] {$d_B/c_B$};
            \draw[thick] (7.5*\xscale,10*\yscale) -- (0*\xscale,10*\yscale) node [anchor=east] {$d_G/c_G$};
            \draw (6*\xscale,16*\yscale) node[anchor=south,align=center] {cooperation only\\in good game};
            \draw[->] (6*\xscale,16*\yscale) -- (3*\xscale,12*\yscale);
            \filldraw[thick] (12*\xscale,12*\yscale) circle (.05 cm) node[anchor=north west] {$x_{\text{initial}}$};
            \draw (5*\xscale,2.5*\yscale) node[anchor=south,align=center] {no \\ cooperation};
            \draw (14*\xscale,16*\yscale) node[align=center] {total \\ cooperation};
        \end{tikzpicture}
        \caption{Initially: $d_B>d_G$ but $c_G$ is too low for optimal cooperation.}
        \label{subfig:cg_initial}
    \end{subfigure}%
    ~ 
    \begin{subfigure}[t]{0.5\textwidth}
        \centering
        \begin{tikzpicture}
        \def \xscale {.25} 
        \def \yscale {.2}
            \draw[thick,->] (0*\xscale,0*\yscale) -- (20.5*\xscale,0*\yscale) node[anchor=north west] {$\frac{\delta^e}{1-\delta^e}p_B$};
            \draw[thick,->] (0*\xscale,0*\yscale) -- (0*\xscale,20.5*\yscale) node[anchor=south east] {$\frac{\delta^e}{1-\delta^e}p_G$};
            \draw[thick,dashed] (0*\xscale,17.5*\yscale) node [anchor=east] {$d_B/c_G$} --(17.5*\xscale,0*\yscale) node [anchor=north] {$d_B/c_B$};
            \draw[thick,dashed] (7.5*\xscale,10*\yscale) -- (0*\xscale,10*\yscale) node [anchor=south east] {$d_G/c_G$};
            \draw[thick] (0*\xscale,17.5/1.85*\yscale) node [anchor=north east] {$d_B/c_G'$} --(17.5*\xscale,0*\yscale);
            \draw[thick] (7.5*\xscale,10/1.85*\yscale) -- (0*\xscale,10/1.85*\yscale) node [anchor=east] {$d_G/c_G'$};
            \filldraw[thick,dashed] (12*\xscale,12*\yscale) circle (.05 cm) node[anchor=north west] {$x_{\text{initial}}$} -- (6*\xscale,6*\yscale) circle (.05 cm) node[anchor=north west] {$x_{\text{reshuffled}}$};
        \end{tikzpicture}
        \caption{Increase: $c_G$ to $c_G'$}
        \label{subfig:cg_increased}
    \end{subfigure}
    \begin{subfigure}[t]{0.5\textwidth}
        \centering
        \begin{tikzpicture}
        \def \xscale {.25}
        \def \yscale {.2}
            \draw[thick,->] (0*\xscale,0*\yscale) -- (20.5*\xscale,0*\yscale) node[anchor=north west] {$\frac{\delta^e}{1-\delta^e}p_B$};
            \draw[thick,->] (0*\xscale,0*\yscale) -- (0*\xscale,20.5*\yscale) node[anchor=south east] {$\frac{\delta^e}{1-\delta^e}p_G$};
            \draw[thick] (0*\xscale,17.5*\yscale) node [anchor=east] {$d_G/c_G$} --(17.5*\xscale,0*\yscale) node [anchor=north] {$d_G/c_B$};
            \draw[thick] (12.5*\xscale,5*\yscale) -- (12.5*\xscale,0*\yscale) node [anchor=north] {$d_B/c_B$};
            \draw (14*\xscale,13*\yscale) node[align=center] {total \\ cooperation};
            \draw (5*\xscale,4*\yscale) node[anchor=south,align=center] {no \\ cooperation};
            \draw (18*\xscale,4*\yscale) node[anchor=south,align=center] {cooperation only\\in bad game};
            \draw[->] (18*\xscale,4*\yscale) -- (13.75*\xscale,1.5*\yscale);
            \filldraw[thick, dashed] (8*\xscale,15*\yscale) circle (.05cm) node[anchor=south west] {$x_{\text{initial}}$};
        \end{tikzpicture}
        \caption{Initially: $d_G>d_B$.}
        \label{subfig:dg_initial}
    \end{subfigure}%
    ~ 
    \begin{subfigure}[t]{0.5\textwidth}
        \centering
        \begin{tikzpicture}
        \def \xscale {.25}
        \def \yscale {.2}
            \draw[thick,->] (0*\xscale,0*\yscale) -- (20.5*\xscale,0*\yscale) node[anchor=north west] {$\frac{\delta^e}{1-\delta^e}p_B$};
            \draw[thick,->] (0*\xscale,0*\yscale) -- (0*\xscale,20.5*\yscale) node[anchor=south east] {$\frac{\delta^e}{1-\delta^e}p_G$};
            \draw[thick, dashed] (0*\xscale,17.5*\yscale) node [anchor=east] {$d_G/c_G$} --(17.5*\xscale,0*\yscale) node [anchor=north] {$d_G/c_B$};
            \draw[thick, dashed] (12.5*\xscale,5*\yscale) -- (12.5*\xscale,0*\yscale);
            \draw[thick] (0*\xscale,12.5*\yscale) node [anchor=east] {$d_B/c_G$} --(12.5*\xscale,0*\yscale) node [anchor=north] {$d_B/c_B$};
            \filldraw[thick, dashed] (8*\xscale,15*\yscale) circle (.05cm) node[anchor=south west] {$x_{\text{initial}}$} -- (8*.5*\xscale,15*.5*\yscale) circle (.05cm) node[anchor=north] {$x_{\text{reshuffled}}$};
            \draw[thick] (5.5*\xscale,7*\yscale) -- (0*\xscale,7*\yscale) node [anchor=east] {$d_G'/c_G$};
        \end{tikzpicture}
        \caption{Decrease: $d_G$ to $d_G'$.}
        \label{subfig:dg_decreased}
    \end{subfigure}%
    \caption{Changing benefits from cooperation, $c_G$, and deviation temptations, $d_G$.}
    \label{fig:comp_stats}
\end{figure}
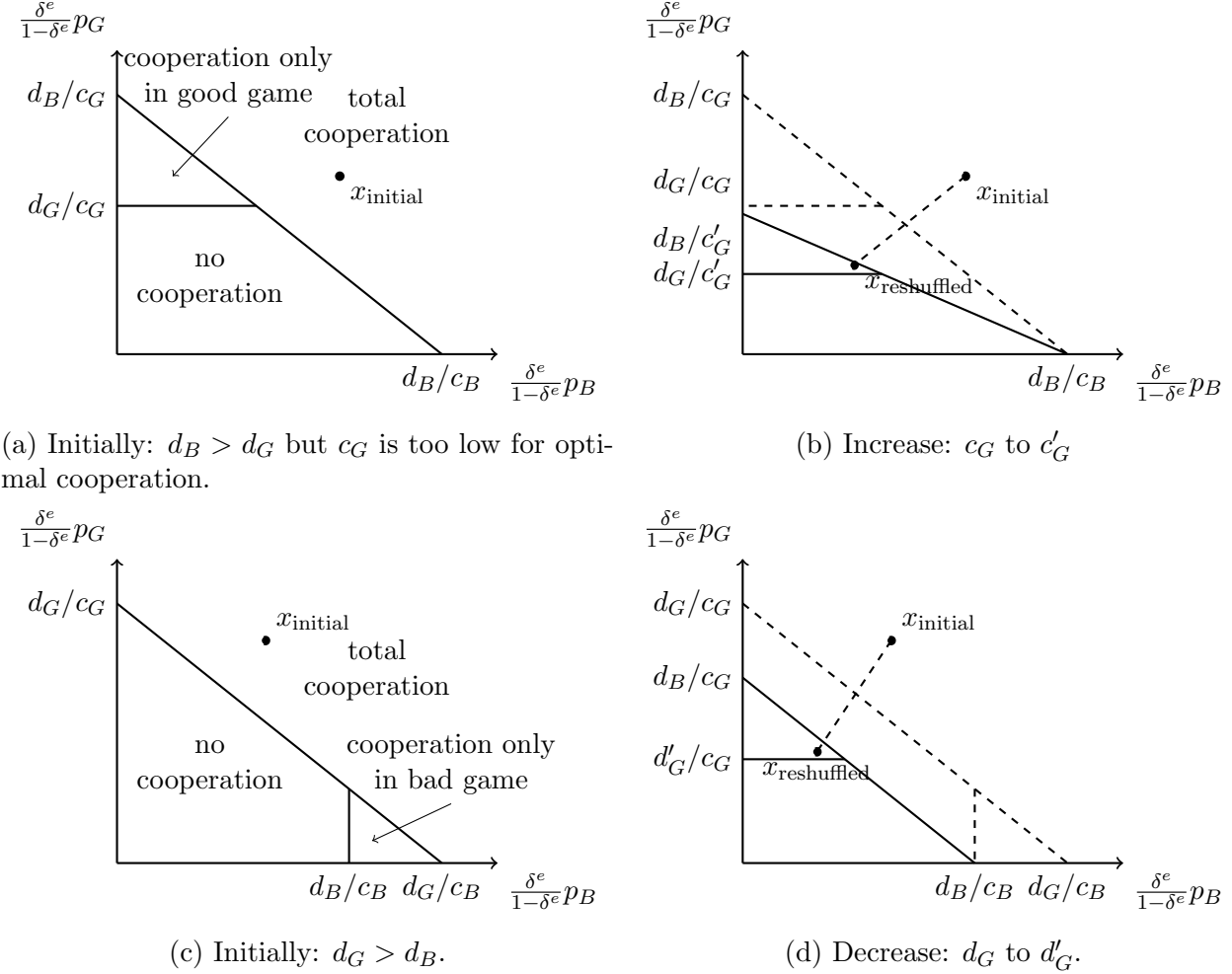

If $d_G>d_B$, optimal cooperation can be achieved only by changing the temptations to deviate. These include introducing rewards to whistle-blowing (increasing $d_B$) or reducing the cost of cooperating in good games (decreasing $d_G$). If $d_G$ can be pushed sufficiently close to zero, then optimal cooperation is feasible with optimal reshuffling. Panels~(c) and~(d) of Figure~\ref{fig:comp_stats} depict such a case. Panel~(c) displays an initial situation with $d_G>d_B$, where cooperation only in the good game is infeasible. We then consider decreasing $d_G$ to some $d_G'$ (see Panel (d)). For $d_G'$ smaller than $d_B$, (i) the region with total cooperation shifts inwards, and (ii) a region where optimal cooperation is the equilibrium outcome appears. If $d_G'$ is small enough so that reshuffling can maneuver the outcome of the game into that region (as displayed), optimal cooperation again becomes feasible ($d_G'<d_B\left(\frac{p_Gc_G}{p_Gc_G+p_Bc_B}\right)$, by Proposition~\ref{prop:optimal_reshuffling}). 

The temptations to deviate in the good game may be inherent to the situation; hence, reducing $d_G$ may be difficult. Or reshuffling teams may be infeasible, e.g., because the organization is small or only a few players, such as an organization's top executives, are relevant. In such circumstances, whistle-blowing rewards can hinder collusion in the bad game. In fact, reshuffling is unnecessary for optimal cooperation if the temptation to deviate in bad games can be increased sufficiently: simply set $d_B>c_G\frac{\delta}{1-\delta}p_G+c_B\frac{\delta}{1-\delta}p_B$. Then, as long as condition~\ref{cond:absolute} in Proposition~\ref{prop:optimal_reshuffling} is satisfied (cooperation is sustainable in the good game), optimal cooperation is feasible without reshuffling. Notice also that in contrast to changing the benefit of good cooperation, changing the deviation temptations does not raise the organization's costs because these rewards and risks are not incurred on the equilibrium path.\footnote{For example, suppose that the organization pays out $c_G$ when there is cooperation (we still assume that actions taken in the games are not contractible). Then, increasing $c_G$ to $c_G'$ to induce optimal cooperation is organizationally optimal if and only if $p_BV_B+p_G(c_G'-c_G)>0$; that is, when the loss of undesired cooperation, weighted by its social value, exceeds the cost of increasing the benefit from good cooperation. In contrast, increasing $d_B$ works only via deterrence so that rewards for whistle-blowing are not paid in equilibrium.}

Do these results imply that an organization always welcomes large rewards for whistle-blowing, i.e., large $d_B$? No. While increasing $d_B$ relaxes two of the three conditions, the third might still not be satisfied: the requirement that cooperation is sustainable in the good games is independent of $d_B$. If this condition fails and whistle-blower rewards are introduced, total cooperation becomes more difficult, possibly yielding no cooperation whatsoever. Therefore, whistle-blower rewards are useful only when good games arrive with high enough probability to sustain cooperation in the good game alone, or deviation payoffs or benefits from cooperation in the good game can be adjusted; that is, a combination of multiple changes in payoffs is needed.

\end{document}